\numberwithin{equation}{section} 
\newcommand*{\ba}{\begin{array}}
\newcommand*{\ea}{\end{array}}
\newcommand*{\be}{\begin{equation}}
\newcommand*{\ee}{\end{equation}}
\newcommand*{\bfe}{\renewcommand{\theenumi}{\thechapter.\arabic{enumi}}\begin{enumerate}}
\newcommand*{\efe}{\end{enumerate}\renewcommand{\theenumi}{\arabic{enumi}}}
\newcommand*{\ben}{\begin{enumerate}}
\newcommand*{\een}{\end{enumerate}}
\newcommand*{\bit}{\begin{itemize}}
\newcommand*{\eit}{\end{itemize}}
\newcommand*{\Ev}{{\bf E}}
\newcommand*{\Var}{{\bf Var}}
\newcommand*{\pd}{\partial}
\newcommand*{\inner}[2]{\left<#1,\,#2\right>}
\newtheorem{defi}{Definition}
\newtheorem{tetel}{Theorem}
\newtheorem{all}[tetel]{Proposition}
\newtheorem{lem}[tetel]{Lemma}
\newtheorem{pelda}[tetel]{Example}
\newtheorem{sejtes}[tetel]{Conjecture}
\newtheorem*{megj}{Remark}
\newcommand{\R}{\mathbb{R}}
\newcommand{\Z}{\mathbb{Z}}
\newcommand{\C}{\mathbb{C}}
\title{Pauli channel tomography with unknown channel directions}
\author{D\'aniel Virosztek$^2$, L\'aszl\'o Ruppert$^{1,2}$, Katalin M. Hangos$^1$\vspace*{0.5cm}\\
$^1$ Process Control Research Group, \\ Computer and Automation Research Institute,\\ H-1518 Budapest, POB 63, Hungary\vspace*{0.25cm}\\
$^2$ Department of Analysis, \\ Budapest University of Technology and Economics, \\ H-1521 Budapest, POB 91, Hungary}
\date{}
\begin{document}
\maketitle

\let\thefootnote\relax\footnotetext{E-mail: virosz89@gmail.com, ruppertl@math.bme.hu, hangos@scl.sztaki.hu\\
This research was supported in part by the Hungarian Research Fund through grant $K83440$ and by the New Hungary Development Plan (Project ID: T\'{A}MOP-4.2.2.B-10/1-2010-0009).}

\begin{abstract} 
In this paper we estimate the parameters of the qubit Pauli channel using the channel matrix formalism. The main novelty of this work is that we do not assume the directions of the Pauli channel to be known, but they are determined through the tomography process, too. The results show that for optimally estimating the contraction parameters and the channel matrix we should have input qubits and measurements in the channel directions. However, for optimally estimating the channel directions, we should use different tomography conditions.
\end{abstract}

\noindent
{\bf Keywords:} experiment design, Pauli channel, parameter estimation, measurement, \\ quadratic error, qubit.

\section{Introduction}
The accurate description of different quantum phenomena is a key issue in their potential use
 in modern IT-applications. In quantum mechanics, both dynamical changes and
communication is treated using quantum channels. Therefore the parameter estimation of quantum channels plays a major role in quantum information processing, and the area of quantum process tomography is flourishing \cite{Anis-2012, Bendersky-2008, Keshari-2011, Roga-2010, Teo-2011}.

Direct quantum process tomography is performed by sending known quantum systems into the  
channel, and then estimating the output state. In quantum mechanics the measurement has a probabilistic nature \cite{Nielsen-book,Petz-book}, therefore  many identical copies of the input quantum system are needed, and an estimator is constructed by using statistical considerations. For achieving efficient process tomography, experiment design is necessary that consists of selecting the optimal input state, optimal measurement of the output state, and an efficient estimator of the channel from the measured data. 

The field of quantum process tomography is well-established, an exhaustive description of possible tomography methods can be found in \cite{Mohseni-2008}. The Pauli channels form a relatively wide family of quantum channels. The tomography of Pauli channels has a huge literature, however, due to the level of difficulty of the topic, papers mostly deal with special cases, e.g., with the optimal parameter estimation of a depolarizing channel \cite{Sasaki-2002}. But there are some publications investigating the estimation of multi-parameter channels \cite{Bendersky-2008,Young-2009}, and the multidimensional case also appears \cite{Fujiwara2003, Nathanson-2007}. There are also some experimental results concerning the optimal estimation of the Pauli-channels \cite{Branderhorst2009,Chiuri-2011}.

In contrast to the majority of the works in this area, we propose an extended problem statement: we investigate qubit Pauli channels with unknown channel directions. Despite of the novelty of the approach, there are a few papers that deal with optimally estimating qubit Pauli channels including their channel directions. In \cite{Ballo-2011} the problem was examined using convex optimization methods, and a numerical method was provided for finding the optimal input - measurement pairs. In \cite{Ruppert-2012} we examined the optimality of the estimation problem using purely statistical considerations to achieve analytical results. However, analytical results could only be obtained for the case of known channel directions. 

Therefore, the aim of this paper is to give an analytical description of the optimal estimation of Pauli channels in the case of unknown channel directions, too.
The efficiency of these estimations is measured here with three quantities: the mean squared error of the estimated contraction parameters and angle parameters, and the mean distance of the estimated and the real channel matrix are investigated. 

The paper is organized as follows. 
In Section 2 we introduce the necessary notions to understand the rest of the article. In Section 3 the tomography method used is described. In Section 4 the optimization of the previously mentioned quantities are performed. Finally, conclusions are drawn.

\section{Preliminaries}
 
In the following section, we give a short introduction to the applied concepts. A more detailed description can be found in \cite{Nielsen-book, Petz-book}. We will only examine two-level systems, that is quantum bits or qubits.

\medskip
The state of two-level quantum systems is described by \textit{density matrices} $\rho \in \mathbf{M}_{2}(\C)$ that are parametrized by a real vector $\theta \in \mathbb{R}^{3}$ 
called the \textit{Bloch vector}.  
\begin{defi}[Bloch parametrization]
\be \label{bloch}
\rho: \mathbb{R}^3 \rightarrow\mathbf{M}_{2}(\C); \
\theta=(\theta_1, \theta_2, \theta_3)^T \mapsto \rho(\theta)= \frac{1}{2} \left(I +\theta_1 \sigma_1+\theta_2 \sigma_2+\theta_3 \sigma_3, \right)
\ee
where
\[
I=\sigma_0 = 
\left( \begin{array}{cc} 
1 & 0 \\
0 & 1 \\
\end{array} \right),
\
\sigma_1 = 
\left( \begin{array}{cc} 
0 & 1 \\
1 & 0 \\
\end{array} \right),
\
\sigma_2 = 
\left( \begin{array}{cc} 
0 & -i \\
i & 0 \\
\end{array} \right),
\
\sigma_3 = 
\left( \begin{array}{cc} 
1 & 0 \\
0 & -1 \\
\end{array} \right).
\]
\end{defi}
\begin{defi}[Quantum state]
A qubit can be described with $2 \times 2$ density matrices ($\rho(\theta)$) satisfying the following condition:
\begin{align*}
\mathrm{Tr}(\rho(\theta))&=1
\\
\rho(\theta) &\geq 0.
\end{align*}
\end{defi}
It is easy to check that $\rho(\theta)$ is a density matrix if and only if $\theta_1^2+\theta_2^2+\theta_3^2 \leq 1$.
That is, the state space can be represented with the unit ball in $\R^3$, the so-called Bloch ball.


\medskip
We will use \textit{von Neumann measurements} with two possible outcomes \cite{PR-2012}.
\begin{defi}[Measurement]
$\{P,I-P\}$ is a von Neumann measurement if $P$ is $2 \times 2$ projection. The probability of measuring outcome $P$ on the system $\rho(\theta)$ is $\mathrm{Tr} (\rho(\theta) P)$.
\end{defi}

The only non-trivial case is when $P$ is a rank-one projection. Then using the same Bloch parametrization as previously, we can rewrite $P$ in the following form:
\be \label{bl-par-meas}
P=\frac{1}{2}\left( \sigma_0+m_1 \sigma_1+m_2 \sigma_2+m_3 \sigma_3 \right),
\ee
where $m_1^2+m_2^2+m_3^2=1$. Using the abbreviation $\underline{m}=(m_1,m_2,m_3)^T$ and $\underline{\theta}=(\theta_1,\theta_2,\theta_3)^T$ we obtain
\[
\mathrm{Prob}(\text{"measuring P"})=\mathrm{Tr}\big(\rho(\theta) P\big)=\frac{1}{2}(1+\underline{m} \cdot \underline{\theta}).
\]


\medskip
\textit{Quantum channels} are completely positive, trace-preserving maps, and Pauli channels are a well-known family of them in the qubit case.
\begin{defi}[Pauli channel]
Let be $\{v_0=I, v_1, v_2, v_3 \}$ an arbitrary base satisfying $\mathrm{Tr} (v_i v_j)=2\delta_{i,j}~, v_i \in \mathbf{M}_{2}^{s.a.}(\C) ~ \forall i,j \in \{0,1,2,3\}$. Let be $\lambda_1, \lambda_2, \lambda_3$ real numbers that fulfill
\be \label{cpfelt}
1 \pm \lambda_3 \geq |\lambda_1 \pm \lambda_2|.
\ee
Then a Pauli channel can be described with a mapping

\be \label{pauli_hat}
\mathcal{E}:\mathbf{M}_{2}(\C) \rightarrow \mathbf{M}_{2}(\C); \ 
\rho=\frac{1}{2}\left(I + \sum_{i=1}^3 \theta_i v_i \right)
\mapsto \mathcal{E}(\rho)=\frac{1}{2}\left(I + \sum_{i=1}^3 \lambda_i \theta_i v_i \right).
\ee
The affine subspaces $ \{\frac{1}{2}\left(I+t v_i \right): t \in \R \} \subset \mathbf{M}_{2}^{s.a.}(\C)$ ($i \in \{1,2,3\}$) are called the channel directions, the numbers $\lambda_1, \lambda_2, \lambda_3$ are called the contraction parameters.
\end{defi}

The completely positiveness of the mapping is guaranteed by condition (\ref{cpfelt}) (see \cite{Petz-2008}). From (\ref{cpfelt}) follows that $-1 \leq \lambda_i \leq 1, \ i \in \{1,2,3\}$. If we look at the effect of a Pauli channel on the Bloch vectors then it becomes clear that we have contractions with parameter $\lambda_i$ in the appropriate channel directions. In other words, the image of the whole Bloch ball will be an ellipsoid with its axes lying in the directions of the channel and being $2 |\lambda_i|$ long.

\begin{defi}[Channel matrix]
We will call the mapping $A: \mathbb{R}^3 \rightarrow \mathbb{R}^3$ the channel matrix of Pauli channel $\mathcal{E}$, if
\be \label{kommut}
\mathcal{E} \circ \rho = \rho \circ A.
\ee
\end{defi}

Straightforward calculations show that $A$ is a linear mapping and it can be always parametrized in the following form:
\begin{tetel}\label{alap_tetel}
The channel matrix $A$ of every Pauli channel can be constructed as
\be
\label{alap}
A( \lambda_1,\lambda_2, \lambda_3,\phi_z, \phi_y, \phi_x)=R_{z} R_{y} R_{x} \Lambda R_{x}^{-1} R_{y}^{-1} R_{z}^{-1},
\ee
where
\begin{align*}
R_{z} (\phi_z)&=
\left(
\begin{array}{ccc}
\cos{\phi_z} & -\sin{\phi_z} & 0 \\
\sin{\phi_z} & \cos{\phi_z} & 0 \\
0 & 0 & 1 
\end{array}
\right), 
\
&&R_{y} (\phi_y)=
\left(
\begin{array}{ccc}
\cos{\phi_y} & 0 & -\sin{\phi_y} \\
0 & 1 & 0 \\
\sin{\phi_z} & 0& \cos{\phi_y}
\end{array}
\right),
\\
R_{x} (\phi_x)&=
\left(
\begin{array}{ccc}
1 & 0 & 0\\
0 & \cos{\phi_x} & -\sin{\phi_x} \\
0 & \sin{\phi_x} & \cos{\phi_x}
\end{array}
\right),
\
&&\Lambda=
\left(
\begin{array}{ccc}
\lambda_{1} & 0 & 0\\
0 & \lambda_{2} & 0 \\
0 & 0 & \lambda_{3}
\end{array}
\right).
\end{align*}
\end{tetel}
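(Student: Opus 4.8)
The plan is to reduce the statement to the classical Euler-angle factorisation in $SO(3)$, after first computing the channel matrix $A$ explicitly in terms of the change of basis between the fixed Pauli basis $\sigma_1,\sigma_2,\sigma_3$ and the channel basis $v_1,v_2,v_3$. First I would note that $\mathrm{Tr}(v_0v_i)=2\delta_{0,i}$ forces $\mathrm{Tr}(v_i)=0$ for $i\in\{1,2,3\}$, so that $v_1,v_2,v_3$ are traceless self-adjoint matrices and hence lie in the same three-dimensional real space that is spanned by $\sigma_1,\sigma_2,\sigma_3$. Writing $v_i=\sum_k O_{ki}\sigma_k$ and imposing the two orthonormality relations $\mathrm{Tr}(v_iv_j)=2\delta_{i,j}$ and $\mathrm{Tr}(\sigma_k\sigma_l)=2\delta_{k,l}$ then yields $O^TO=I$, i.e.\ $O\in O(3)$.

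Next I would substitute the two Bloch parametrizations into the defining relation $\mathcal{E}\circ\rho=\rho\circ A$. Expanding $\rho(\theta)=\frac12(I+\sum_k\theta_k\sigma_k)$ in the $v$-basis shows that its $v$-coordinates are $O^T\theta$; applying $\mathcal{E}$ multiplies these coordinates by $\Lambda=\mathrm{diag}(\lambda_1,\lambda_2,\lambda_3)$ according to (\ref{pauli_hat}); and converting the result back to the $\sigma$-basis produces the $\sigma$-coordinates $O\Lambda O^T\theta$. Since this must equal $A\theta$ for every $\theta$, I obtain $A=O\Lambda O^T$.

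It then remains to bring $O$ into the prescribed form. If $\det O=1$, this is exactly the Euler/Tait--Bryan factorisation $O=R_z(\phi_z)R_y(\phi_y)R_x(\phi_x)$; using $R^{-1}=R^T$ for rotations gives $A=R_zR_yR_x\,\Lambda\,R_x^{-1}R_y^{-1}R_z^{-1}$, as claimed. If $\det O=-1$, I would absorb the reflection into $\Lambda$: writing $O=RD$ with $R\in SO(3)$ and $D$ a diagonal matrix with entries $\pm1$ and $\det D=-1$, the diagonality of $D$ and $\Lambda$ gives $D\Lambda D^T=\Lambda$, hence $A=R\Lambda R^T$ and we are reduced to the previous case. (This also reflects the fact that the basis $\{v_i\}$ is only fixed up to signs, and flipping a sign leaves $\Lambda$ untouched.)

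I expect the main obstacle to be the Euler-angle step together with its bookkeeping: one must check that every rotation factors as $R_zR_yR_x$ in the specific convention fixed by the statement, and verify that the reflection genuinely drops out in the $\det O=-1$ case. By contrast, the identity $A=O\Lambda O^T$ is a routine coordinate computation once the orthogonality of $O$ is established.
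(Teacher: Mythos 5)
Your proposal is correct, and in fact the paper offers no proof of this theorem at all --- it is introduced only by the remark that ``straightforward calculations'' establish it --- so your argument supplies exactly the details the authors leave implicit. The three steps you identify (tracelessness and orthogonality of the change-of-basis matrix $O$ from $\mathrm{Tr}(v_iv_j)=2\delta_{ij}$, the coordinate computation giving $A=O\Lambda O^{T}$, and the Tait--Bryan factorisation of $O\in SO(3)$ with the reflection absorbed into $\Lambda$ via $D\Lambda D^{T}=\Lambda$ when $\det O=-1$) are all sound and constitute the natural, presumably intended, proof.
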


Note that this parametrization is only surjective, not bijective. This means that we can get the same channel matrix for several set of parameters.


\section{Tomography of Pauli channels}

Using the notions described in the previous section, we can characterize the Pauli channel with its channel matrix ($A$). Furthermore, Theorem \ref{alap_tetel} shows that the channel matrix can be constructed by using three contraction parameters ($\lambda_i$) and three angle parameters ($\phi_i$).
Our aim is to give the best estimation of these quantities. 

For this purpose we will send some input qubits through the channel, perform some measurements and construct an estimator from the measured data. 

\paragraph{Input qubits and measurements}
We have to have at least three different input qubits for complete channel tomography. Let us suppose that we have three different measurements, too. Previous investigations showed that one should choose pure input states, which are orthogonal to each other  \cite{Ballo-2011},\cite{Ruppert-2012}.

Let us suppose that the Bloch vectors of the input states are $\underline{\theta}^{(1)},\underline{\theta}^{(2)},\underline{\theta}^{(3)}$. If they are orthogonal, they create an orthogonal matrix and we can parametrize them the following way
\be \label{inpdef}
\Theta=\left[ \underline{\theta}^{(1)},\underline{\theta}^{(2)},\underline{\theta}^{(3)} \right]=R_z(\vartheta_z)R_y(\vartheta_y)R_x(\vartheta_x),
\ee
where $R_z, R_y, R_x$ are the same rotations as in (\ref{alap}) and $0 \leq \vartheta_z, \vartheta_y < \pi$, $0 \leq \vartheta_x < \frac{\pi}{2}$.

Similarly, we suppose that the measurements with Bloch vectors $\underline{m}^{(1)},\underline{m}^{(2)},\underline{m}^{(3)}$ are orthogonal  \cite{Ruppert-2012}. Thus, we can parametrize them using one-dimensional rotations, too:
\be \label{merdef}
M=\left[ \underline{m}^{(1)},\underline{m}^{(2)},\underline{m}^{(3)} \right]=R_z(\tau_z)R_y(\tau_y)R_x(\tau_x)
\ee
where $0 \leq \tau_z, \tau_y < \pi$, $0 \leq \tau_x < \frac{\pi}{2}$.


\paragraph{Estimation of the channel matrix}

The matrix generated from the Bloch vectors of the output qubits $\Xi=\left[ {\underline{\xi}^{(1)}},{\underline{\xi}^{(2)}},{\underline{\xi}^{(3)}} \right]$ fulfill

\be \label{csathat}
\Xi=A( \lambda_1,\lambda_2, \lambda_3,\phi_z, \phi_y, \phi_x) \Theta.
\ee

Let us suppose that we perform the $i$-th measurement $N$ times on the copies of the $j$-th output qubit and denote by $N_{ij}$ the number of measurement results corresponding to $m^{(i)}$. Then $N_{ij}$ is binomially distributed with the following parameters
\be \label{nij2}
\ N_{ij} \sim \mathrm{Binom}\left(N, \frac{1+\underline{m}^{(i)} \cdot \underline{\xi}^{(j)} }{2} \right).
\ee
We can use the notation $x_{ij}=\underline{m}^{(i)} \cdot \underline{\xi}^{(j)}, \ \forall \ i,j \in \{1,2,3\}$, then $X=\{x_{ij}\}_{i,j=1}^{3}$ can be written in the form
\be \label{xmeghat}
X=M^{T}\Xi.
\ee

From (\ref{nij2}) we can estimate the elements of $X$ (since $\Ev (N_{ij})=N\cdot \frac{1+x_{ij}}{2}$):
\be \label{hatxijdef}
\hat x_{ij}:=\frac{2}{N}\cdot  N_{ij}-1.
\ee
Furthermore, the elements of $\Xi$ can also be estimeted from (\ref{xmeghat}) as
\be \label{koord}
\hat \Xi := M \hat X.
\ee
Finally, from (\ref{csathat}) we can obtain an estimation of the channel matrix
\be \label{hatadef}
\hat{A}:= \hat \Xi \Theta^{-1}=M \hat X \Theta^{T}.
\ee

\paragraph{Estimation of the channel parameters}

From (\ref{alap}) we know the mapping from the sets of parameters to the channel matrices:
\[
A: \mathcal{D} \rightarrow \mathbf{M}_3 (\R); \ (\lambda_1,\lambda_2, \lambda_3, \phi_z, \phi_y, \phi_x) \mapsto A(\lambda_1,\lambda_2, \lambda_3, \phi_z, \phi_y, \phi_x),
\]
where $\mathcal{D}  \subset \R^6 $, and our aim is to define its inverse mapping, i.e., we want to estimate the channel parameters from the estimation of the channel matrix (\ref{hatadef}). That is we need to find the mapping
\[
T: \mathbf{M}_3 (\R) \rightarrow \R^6; \ \hat A \mapsto (\hat \lambda_1, \hat \lambda_2, \hat \lambda_3, \hat \phi_z, \hat \phi_y, \hat \phi_x),
\]
which fulfills
\be \label{balinv2}
T \circ A=\mathrm{Id}_{\mathcal{D}}.
\ee


We can construct $T$ the following way.
Let us symmetrize our estimation: 
\[
\hat{A}_s:= \frac{1}{2} \left( \hat A + \hat{A}^{T} \right).
\]
Then $\hat{A}_s$ can be diagonalized in an orthonormal basis. The Jordan decomposition of $\hat{A}_s$ can be calculated easily algebraically since its characteristic polynomial is cubic. This way we get the eigenvalues $\hat \lambda_1 \geq \hat \lambda_2 \geq \hat \lambda_3$, which are exactly the estimates of the channel contraction parameters, and the corresponding normalized eigenvectors: $\mathbf{v_1}, \mathbf{v_2}, \mathbf{v_3}$ are also obtained.  

The next question is how we can calculate the angle parameters from these eigenvectors.
From (\ref{alap}) we can see the geometrical meaning of the angle parameters: $\phi_z$ and $\phi_y$ are the polar and azimuth angles of $\mathbf{v_1}$, while $\phi_x$ is the angle of $\mathbf{v_2}$ and the intersection of planes $z=0$ and $\mathbf{v_1}^{\perp}$ (orthogonal subspace of $\mathbf{v_1}$).

These quantities can be uniquely determined only if  $\lambda_1 > \lambda_2 > \lambda_3$, so we need to restrict the domain of mapping $A$ ($\mathcal{D}$) to get a one-to-one correspondence between channel matrices and channel parameters.

\begin{tetel}\label{domain}
The parametrization will be bijective on the following domain:
\[
\mathcal{D}=\{ (\lambda_1, \lambda_2, \lambda_3, \phi_z,\phi_y,\phi_x) \in \R^6 \ : ~ 1 \pm \lambda_3 \geq |\lambda_1 \pm \lambda_2|, \ \lambda_1 \geq \lambda_2 \geq \lambda_3,
\]
\[
\phi_z, \phi_y, \phi_x \in [0,\pi),
\phi_y=\frac{\pi}{2} \Rightarrow \phi_z=0, \ 
\lambda_1=\lambda_2=\lambda_3 \Rightarrow \phi_z=\phi_y=\phi_x=0,\]
\[\lambda_1=\lambda_2>\lambda_3 \Rightarrow (\phi_x=0 \text{ and } \phi_y=0 \Rightarrow \phi_z=0),\]
\[
\lambda_1>\lambda_2=\lambda_3 \Rightarrow \phi_x=0 \}.
\]
\end{tetel}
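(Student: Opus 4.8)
The plan is to reduce the statement to a normal-form problem for symmetric matrices and then read each parameter off the spectral data of $A$.

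First I would record that, since $Q:=R_zR_yR_x\in\mathrm{SO}(3)$, formula (\ref{alap}) reads $A=Q\Lambda Q^{T}$, so every channel matrix in the image is symmetric with spectrum $\{\lambda_1,\lambda_2,\lambda_3\}$ and with the columns $\mathbf{v}_1,\mathbf{v}_2,\mathbf{v}_3$ of $Q$ as an orthonormal eigenbasis. By the spectral theorem this simultaneously yields surjectivity onto the symmetric matrices whose eigenvalues obey the bound (\ref{cpfelt}), i.e.\ onto the Pauli channel matrices. The ordering $\lambda_1\ge\lambda_2\ge\lambda_3$ then fixes the assignment of eigenvalues to the diagonal of $\Lambda$, so the contraction parameters are recovered from $A$ unambiguously. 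It remains to show that on $\mathcal{D}$ the angle triple is determined by $A$; equivalently, that the Euler-angle constraints select exactly one $Q$ from each coset $Q\cdot\mathrm{Stab}(\Lambda)$, where $\mathrm{Stab}(\Lambda)=\{S\in\mathrm{SO}(3):S\Lambda S^{T}=\Lambda\}$ (two tuples give the same $A$ iff their rotations differ by such an $S$).

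Next I would treat the generic stratum $\lambda_1>\lambda_2>\lambda_3$, where $\mathrm{Stab}(\Lambda)$ is the Klein four-group $\{\mathrm{diag}(\pm1,\pm1,\pm1):\det=+1\}$, so each eigenvector is determined only up to sign. A direct computation gives $\mathbf{v}_1=Qe_1=(\cos\phi_z\cos\phi_y,\ \sin\phi_z\cos\phi_y,\ \sin\phi_y)^{T}$, whose $z$-coordinate $\sin\phi_y\ge0$ on $[0,\pi)$ vanishes only at $\phi_y=0$. I would check that the ranges $\phi_z,\phi_y\in[0,\pi)$ pick out the representative of $\pm\mathbf{v}_1$ with non-negative $z$-coordinate (and the half-equator when $\phi_y=0$), giving a bijection onto the real projective plane $\mathbb{R}\mathrm{P}^{2}$ once the pole $\mathbf{v}_1=e_3$, attained exactly at $\phi_y=\tfrac{\pi}{2}$ where the azimuth degenerates, is normalised by the rule $\phi_y=\tfrac{\pi}{2}\Rightarrow\phi_z=0$. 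With $\phi_z,\phi_y$ fixed, $P:=R_zR_y$ is determined, and $\mathbf{v}_2=\cos\phi_x\,(Pe_2)+\sin\phi_x\,(Pe_3)$ exhibits $\mathbf{v}_2$ as the rotation by $\phi_x$ inside $\mathbf{v}_1^{\perp}$ of the reference vector $Pe_2=(-\sin\phi_z,\cos\phi_z,0)^{T}$, which is precisely the intersection of $\mathbf{v}_1^{\perp}$ with the plane $z=0$ promised in the text; since $\mathbf{v}_2$ is determined up to sign, $\phi_x\in[0,\pi)$ pins the unique representative.

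Then I would dispatch the degenerate strata by computing $\mathrm{Stab}(\Lambda)$ in each case and matching it to the conditional rules. If $\lambda_1=\lambda_2=\lambda_3$ then $A=\lambda I$ carries no angular information and $\mathrm{Stab}(\Lambda)=\mathrm{SO}(3)$, forcing the single representative $\phi_z=\phi_y=\phi_x=0$. If $\lambda_1>\lambda_2=\lambda_3$ the block in the $2$--$3$ coordinates is scalar, so $R_x$ commutes with $\Lambda$ and $A=R_zR_y\Lambda R_y^{T}R_z^{T}$ is independent of $\phi_x$; setting $\phi_x=0$ kills this redundancy, after which only the simple eigenvector $\mathbf{v}_1$ survives and the $(\phi_z,\phi_y)$-analysis above applies verbatim. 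If $\lambda_1=\lambda_2>\lambda_3$ the scalar block sits in the $1$--$2$ coordinates, so $\mathrm{Stab}(\Lambda)\cong\mathrm{O}(2)$ rotates the degenerate eigenplane and the only spectral invariant is the simple eigenvector $\mathbf{v}_3=Qe_3$ up to sign; with $\phi_x=0$ one gets $\mathbf{v}_3=(-\cos\phi_z\sin\phi_y,\ -\sin\phi_z\sin\phi_y,\ \cos\phi_y)^{T}$, whose pole $\mathbf{v}_3=e_3$ now sits at $\phi_y=0$, exactly the locus normalised by the accompanying rule $\phi_y=0\Rightarrow\phi_z=0$, and I would verify that $(\phi_z,\phi_y)\in[0,\pi)^2$ then maps bijectively onto the directions $\pm\mathbf{v}_3$.

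Finally, the main obstacle is precisely this last verification in each stratum: one must confirm that the half-open ranges together with the conditional rules carve out exactly one representative of every stabilizer orbit, neither identifying two distinct tuples nor omitting a direction. The delicate points are the sign ambiguity of each eigenvector (absorbed by the length-$\pi$ ranges), the \emph{differing} pole conventions for the two charts ($\phi_y=\tfrac{\pi}{2}$ for the $\mathbf{v}_1$-parametrizations versus $\phi_y=0$ for the $\mathbf{v}_3$-parametrization), and the behaviour at the endpoint $\phi=0$, where one must check it is not also reached from the interior. Assembling the per-stratum bijections gives injectivity on $\mathcal{D}$, while surjectivity onto the Pauli channel matrices reduces to checking that each chart is onto in its stratum; together these furnish the inverse $T$ with $T\circ A=\mathrm{Id}_{\mathcal{D}}$ claimed in Theorem \ref{domain}.
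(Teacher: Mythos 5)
The paper offers no proof of Theorem \ref{domain} at all, so there is nothing to compare your argument against; on its own terms your strategy --- write $A=Q\Lambda Q^{T}$ with $Q=R_zR_yR_x\in\mathrm{SO}(3)$, reduce bijectivity to selecting one representative from each coset $Q\cdot\mathrm{Stab}(\Lambda)$, and treat the eigenvalue strata separately --- is the natural one, and your computations of $\mathbf{v}_1=Qe_1$, of $\mathbf{v}_2$ as the rotation by $\phi_x$ inside $\mathbf{v}_1^{\perp}$ of $(-\sin\phi_z,\cos\phi_z,0)^{T}$, and of the stabilizers in the degenerate strata are all correct. You also silently (and correctly) resolve an ambiguity in the statement: the clause for $\lambda_1=\lambda_2>\lambda_3$ must be read as $\phi_x=0$ \emph{and} $(\phi_y=0\Rightarrow\phi_z=0)$, since under the other parsing $\phi_x$ stays free and injectivity fails at once.

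However, the verification you defer to the last paragraph does not go through in the stratum $\lambda_1=\lambda_2>\lambda_3$, and this is a genuine gap. You claim there that $(\phi_z,\phi_y)\in[0,\pi)^2$, normalised only by $\phi_y=0\Rightarrow\phi_z=0$, maps bijectively onto the directions $\pm\mathbf{v}_3$; that is true, but it is not the set that $\mathcal{D}$ hands you. The condition $\phi_y=\tfrac{\pi}{2}\Rightarrow\phi_z=0$ is imposed in $\mathcal{D}$ unconditionally, hence also in this stratum, where combined with $\phi_x=0$ it is destructive: $\mathbf{v}_3=(-\cos\phi_z\sin\phi_y,\,-\sin\phi_z\sin\phi_y,\,\cos\phi_y)^{T}$ is equatorial exactly when $\phi_y=\tfrac{\pi}{2}$, so forcing $\phi_z=0$ there leaves only $\pm e_1$ among the equatorial directions of $\mathbf{v}_3$. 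Concretely, the Pauli channel $A=\mathrm{diag}(\lambda_1,\lambda_3,\lambda_1)$ with $\lambda_1>\lambda_3$ (non-degenerate axis along $e_2$) is reached only by $\phi_x=0$, $\phi_z=\phi_y=\tfrac{\pi}{2}$, a point excluded from $\mathcal{D}$; surjectivity onto the Pauli channel matrices therefore fails. Your write-up must either record this as a defect of the stated domain (the cure is to impose $\phi_y=\tfrac{\pi}{2}\Rightarrow\phi_z=0$ only when $\lambda_1>\lambda_2$, where it resolves the genuine gimbal-lock redundancy between $\phi_z$ and $\phi_x$) or weaken the conclusion to injectivity; as written, the step ``assembling the per-stratum bijections'' silently substitutes the wrong normalisation in this stratum.
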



\section{Optimal input states and measurements}

Let us introduce some abbreviations:
\begin{align*}
\underline{\lambda}&=[\lambda_1,\lambda_2, \lambda_3] ~ - ~ \textrm{channel contraction parameters},
\\
\underline{\phi}&=[\phi_z, \phi_y, \phi_x]  ~ -~ \textrm{channel angle parameters}, 
\\
\underline{\tau}&=[\tau_z, \tau_y, \tau_x]   ~ -~ \textrm{measurement parameters},
\\
\underline{\vartheta}&=[\vartheta_z, \vartheta_y, \vartheta_x]  ~ -~ \textrm{input qubits parameters}. 
\end{align*}

In the previous section we gave an estimation method of the channel matrix ($A$), contraction parameters ($\underline{\lambda}$) and angle parameters ($\underline{\phi}$), now we want to analyze the estimates. 

The distance between the real and estimated parameters 
 can be given in various ways \cite{Belavkin-2005}. Let us define our loss functions as the averaged squared errors of the previously mentioned quantities:
\begin{align}\label{eff3}
f(\underline{\lambda},\underline{\phi},\underline{\tau},\underline{\vartheta},N)&=\Ev \left( \|A(\hat \lambda_1, \hat \lambda_2, \hat \lambda_3, \hat \phi_z,\hat \phi_y, \hat \phi_x)-A(\lambda_1, \lambda_2, \lambda_3, \phi_z, \phi_y, \phi_x) \|^{2} \right),
\\ \label{eff2}
g(\underline{\lambda},\underline{\phi},\underline{\tau},\underline{\vartheta},N)&=\Ev \left((\hat \lambda_1- \lambda_1)^2+(\hat \lambda_2- \lambda_2)^2+(\hat \lambda_3- \lambda_3)^2 \right),
\\ \label{eff1}
h(\underline{\lambda},\underline{\phi},\underline{\tau},\underline{\vartheta},N)&=\Ev \left(\mathrm{dist}(\hat \phi_z,\phi_z)^2+\mathrm{dist}(\hat \phi_y,\phi_y)^2+\mathrm{dist}(\hat \phi_x,\phi_x)^2 \right),
\end{align}
where $\|\cdot\|$ is the  Hilbert--Schmidt norm, $\mathrm{dist}(\hat \phi,\phi):= \mathrm{inf}\{|\hat \phi-(\phi + k \pi)|: \ k \in \Z\}.$

\begin{megj}
The estimation error of the channel matrix is equivalent to the average estimation error of output qubits, since
\begin{align*}
& \int_\theta \| \mathcal{E}(\rho(\theta))-\hat{\mathcal{E}}(\rho(\theta)) \|^2 d\theta=\int_\theta \frac12 \| A~ \theta- \hat A~\theta \|^2  d\theta= \frac12 \int_\theta \sum_{i=1}^3 \left( \sum_{j=1}^3   (A_{ij}-\hat A_{ij})  \theta_j \right)^2  d\theta=
\\
&=\frac12 \int_\theta \sum_{i=1}^3 \sum_{j=1}^3   \theta_j^2 (A_{ij}-\hat A_{ij})^2  d\theta + \frac12 \int_\theta \sum_{i=1}^3 \sum_{j < k}   2 \theta_j \theta_k (A_{ij}-\hat A_{ij}) (A_{ik}-\hat A_{ik})  d\theta=
\\
&=\frac12  \sum_{i=1}^3 \sum_{j=1}^3 \left(\int_\theta  \theta_j^2 d\theta\right) (A_{ij}-\hat A_{ij})^2 + \sum_{i=1}^3 \sum_{j < k}    \left(\int_\theta  \theta_j \theta_k d\theta\right) (A_{ij}-\hat A_{ij}) (A_{ik}-\hat A_{ik})  =
\\ 
&= c \|A-\hat A\|^2,
\end{align*}
where in the last step we used the symmetry of Bloch-ball: $\int_\theta  \theta_j^2 d\theta=$constant, $\int_\theta  \theta_j \theta_k d\theta=0$.
\end{megj}

In the following we will find the optimal inputs and measurements for a given channel ($\underline{\lambda},\underline{\phi}$) and number of measurements ($N$) that minimize the above loss functions.

\subsection{The optimal estimation of the channel matrix}

Before obtaining the optimal estimation settings, we will introduce some useful statements.

\begin{all}[Rotational invariance] \label{rotation}
For any $O \in \mathbf{M}_3 (\R)$ orthogonal matrix, the estimation of the Pauli channel described by the channel matrix $A$ using the input and measurement settings $\Theta$ and $M$ (see (\ref{inpdef}), (\ref{merdef})) is exactly as efficient as the estimation of the Pauli channel described by $O A O^{-1}$ with the input and measurement settings $O \Theta$ and $O M$.
\end{all}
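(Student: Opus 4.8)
The plan is to prove that the entire tomography pipeline of Section~3 is \emph{equivariant} under a simultaneous rotation of channel, inputs and measurements, and then to exploit the orthogonal invariance of the loss functions (\ref{eff3})--(\ref{eff1}). Writing $A'=OAO^{-1}$, $\Theta'=O\Theta$, $M'=OM$ for the rotated problem, I would first follow the data through (\ref{csathat})--(\ref{hatadef}). Since $O^{-1}=O^{T}$, (\ref{csathat}) gives $\Xi'=A'\Theta'=OAO^{-1}O\Theta=O\Xi$, and the decisive observation is that the measurement probabilities are \emph{unchanged}:
\[
X'=(M')^{T}\Xi'=(OM)^{T}(O\Xi)=M^{T}(O^{T}O)\Xi=M^{T}\Xi=X.
\]
Thus every binomial parameter in (\ref{nij2}) is preserved, so by (\ref{hatxijdef}) the random matrices $\hat X'$ and $\hat X$ are identically distributed.

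From this I would read off equivariance of the channel-matrix estimator. By (\ref{hatadef}), $\hat A'=M'\hat X'(\Theta')^{T}=OM\hat X'\Theta^{T}O^{T}$, and since $\hat X'$ and $\hat X$ have the same law, $\hat A'$ is distributed exactly as $O(M\hat X\Theta^{T})O^{T}=O\hat A O^{T}$. Symmetrization commutes with orthogonal conjugation, so $\hat A_{s}'$ is distributed as $O\hat A_{s}O^{T}$; moreover every channel matrix is symmetric (in (\ref{alap}) the $R$'s are orthogonal, whence $A^{T}=A$), so the reconstruction $A(\hat\lambda,\hat\phi)$ coincides with $\hat A_{s}$ and the true value for the rotated problem is $A'=OAO^{T}$.

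The invariance of the channel-matrix loss (\ref{eff3}) is now immediate: $\|\hat A_{s}'-A'\|^{2}=\|O(\hat A_{s}-A)O^{T}\|^{2}=\|\hat A_{s}-A\|^{2}$ by orthogonal invariance of the Hilbert--Schmidt norm, so $f$ is unchanged. The contraction-parameter loss (\ref{eff2}) follows because conjugation by $O$ preserves the spectrum: the ordered eigenvalues of $\hat A_{s}'$ (respectively $A'$) equal those of $\hat A_{s}$ (respectively $A$), hence each $\hat\lambda_{i}$ and $\lambda_{i}$ is untouched and $g$ is unchanged.

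The main obstacle is the angle loss (\ref{eff1}). The eigenvector frames transform cleanly, $R\mapsto OR$ for the true channel and $\hat R\mapsto O\hat R$ for the estimate, so the \emph{relative} frame $R^{T}\hat R=(R')^{T}\hat R'$ is preserved and the intrinsic orientation error is rotation-invariant. The difficulty is that $h$ is phrased through the Euler angles $\underline\phi$ of (\ref{alap}), which do not add under a global rotation, so $\sum_{i}\mathrm{dist}(\hat\phi_{i},\phi_{i})^{2}$ is not term-by-term $O$-invariant. I would close the argument at the level of the preserved relative rotation $R^{T}\hat R$, using that the eigenframe produced by $T$ is $O$-equivariant (the eigenvectors of $OBO^{T}$ are the $O$-images of those of $B$), so that the conditional law of $\hat R'$ given $R'$ is the $O$-pushforward of that of $\hat R$ given $R$; the delicate point I expect to require the most care is matching this to the summed angular distances near the coordinate singularities of the Euler parametrization (such as $\phi_{y}=\pi/2$) that Theorem~\ref{domain} is designed to exclude.
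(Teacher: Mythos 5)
The paper states Proposition \ref{rotation} without any proof, so there is no official argument to compare against; judged on its own terms, your proposal is the natural one and its core is correct. The decisive computation $X'=(OM)^{T}(OAO^{-1})(O\Theta)=M^{T}A\Theta=X$ shows that the two experiments produce \emph{identical} (not merely equidistributed) measurement statistics, so one can couple them with $\hat X'=\hat X$, giving $\hat A'=O\hat A O^{T}$ pathwise; invariance of $f$ (Hilbert--Schmidt norm under orthogonal conjugation) and of $g$ (spectrum under conjugation) then follows exactly as you argue, and this already covers everything the paper actually uses the proposition for in Theorems \ref{ch_mx} and \ref{contraction}. The loose end you flag for $h$ is a genuine one, and in fact cannot be closed as stated: the sum of squared Euler-angle distances in (\ref{eff1}) is \emph{not} an invariant function of the relative frame $R^{T}\hat R$. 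Concretely, take the true frame $R=I$ and an estimate $\hat R=R_z(\epsilon)$, so $h$ picks up $\epsilon^{2}$; conjugating by $O=R_y(\alpha)$ sends the Euler angles of the estimate to $(\epsilon\sec\alpha,\ \alpha,\ -\epsilon\tan\alpha)+O(\epsilon^{2})$ about the true value $(0,\alpha,0)$, contributing $\epsilon^{2}(\sec^{2}\alpha+\tan^{2}\alpha)\neq\epsilon^{2}$. So Proposition \ref{rotation} is literally false for the loss (\ref{eff1}) unless ``efficiency'' is read with respect to the intrinsic orientation error $R^{T}\hat R$, or unless the Euler-angle loss is only ever evaluated at the base point $\underline{\phi}=\underline{0}$, where it agrees with the bi-invariant metric to first order --- which is all the paper does with $\tilde h$. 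Your caveat is therefore not excessive caution but a correct diagnosis of an imprecision in the statement itself.
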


 Therefore, it is enough to investigate Pauli channels with channel parameters $\phi_z=\phi_y=\phi_x=0.$

\begin{lem}[Unbiasedness 1]\label{unbias1}
The estimation of the channel matrix given in (\ref{hatadef}) is unbiased.
\end{lem}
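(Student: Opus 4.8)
The plan is to establish $\Ev(\hat A)=A$ directly, by propagating expectations through the chain of definitions that produces $\hat A$ from the raw measurement counts $N_{ij}$. The randomness enters only through the $N_{ij}$; the input and measurement settings $\Theta$ and $M$ are deterministic quantities chosen by the experimenter, so the whole computation reduces to linearity of expectation together with the orthogonality of these settings.

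First I would compute the expectation of the entrywise estimator $\hat x_{ij}$ defined in (\ref{hatxijdef}). Since $N_{ij}$ is binomial with the parameters in (\ref{nij2}), its mean is $\Ev(N_{ij})=N\cdot\frac{1+x_{ij}}{2}$, where $x_{ij}=\underline{m}^{(i)}\cdot\underline{\xi}^{(j)}$ is deterministic (it depends only on the true output $\Xi=A\Theta$ and the fixed measurement vectors). Substituting into $\hat x_{ij}=\frac{2}{N}N_{ij}-1$ gives $\Ev(\hat x_{ij})=x_{ij}$ for every $i,j$, i.e.\ the matrix-valued estimator satisfies $\Ev(\hat X)=X$.

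Next I would pass from $\hat X$ to $\hat A$. By (\ref{hatadef}) we have $\hat A=M\hat X\Theta^{T}$, and since $M$ and $\Theta$ are deterministic, linearity of expectation yields $\Ev(\hat A)=M\,\Ev(\hat X)\,\Theta^{T}=M X\Theta^{T}$. To finish, I would substitute $X=M^{T}\Xi$ from (\ref{xmeghat}) and $\Xi=A\Theta$ from (\ref{csathat}), and invoke orthogonality: because $\Theta$ and $M$ are products of rotations (see (\ref{inpdef}), (\ref{merdef})), they satisfy $MM^{T}=I$ and $\Theta\Theta^{T}=I$. Hence
\be
\Ev(\hat A)=M M^{T}\Xi\Theta^{T}=\Xi\Theta^{T}=A\Theta\Theta^{T}=A,
\ee
which is exactly the asserted unbiasedness.

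There is no genuine obstacle here; the argument is a short computation. The only two points that require care are the identification $\Ev(N_{ij})=N\frac{1+x_{ij}}{2}$, which is immediate from the binomial mean, and the orthogonality $M^{-1}=M^{T}$, $\Theta^{-1}=\Theta^{T}$, which is precisely what lets the measurement and input transformations cancel cleanly and leaves the true channel matrix $A$.
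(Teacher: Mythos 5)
Your argument is correct and follows essentially the same route as the paper: establish $\Ev(\hat X)=X$ from the binomial mean and then push the expectation through the linear map $\hat X\mapsto M\hat X\Theta^{T}$. You simply make explicit the final identification $MX\Theta^{T}=M M^{T}\Xi\Theta^{T}=A\Theta\Theta^{T}=A$, which the paper leaves implicit in the phrase ``so $\Ev\hat A=A$.''
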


\begin{proof}
The estimation of the elements of $X$ defined in (\ref{hatxijdef}) is unbiased (that is, $\Ev \hat X=X$). 
The estimator of (\ref{hatadef}) is a linear function of $\hat X$, the expected value is linear, too, so $\Ev \hat A=A$.
\end{proof}

Let us introduce the notation $\hat a_{ij}=[\hat A]_{ij} \ (i,j \in \{1,2,3\})$ and the index set $H=\{11, 12, \dots, 32, 33 \}.$
$\hat A$ is a linear transformation of $\hat X$, so 
\be \label{ckldef}
\hat a_{k}= \sum_{l \in H} c_{kl}(\underline{\tau},\underline{\vartheta}) \hat x_{l} \ (k \in H), 
\ee
where the constants come from the actual values of $\Theta$ and $M$, hence determined by the parameters $\underline{\vartheta}$ and $\underline{\tau}$.

\begin{lem}\label{linear}
Set $\psi=\sum_{k \in H} d_k \hat a_k \ (d_k \in \R).$ Then
\be \label{11lem}
\Var \left( \psi \right)=\sum_{l \in H} \left( \sum_{k \in H} d_k c_{kl}\left(\underline{\tau}, \underline{\vartheta} \right) \right)^2 \frac{1-x_l^2}{N}.
\ee
\end{lem}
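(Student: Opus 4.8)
The plan is to push the whole computation down to the level of the independent measurement counts $N_{ij}$, so that \eqref{11lem} follows from the elementary rule for the variance of a linear combination of independent random variables. First I would rewrite $\psi$ purely in terms of the $\hat x_l$: substituting \eqref{ckldef} and interchanging the two finite sums gives
\[
\psi = \sum_{k \in H} d_k \sum_{l \in H} c_{kl}(\underline{\tau},\underline{\vartheta})\, \hat x_l = \sum_{l \in H} \Big( \sum_{k \in H} d_k\, c_{kl}(\underline{\tau},\underline{\vartheta}) \Big) \hat x_l,
\]
so $\psi$ is a linear combination of the nine estimators $\hat x_l$ with coefficients $\sum_{k} d_k c_{kl}$. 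These coefficients are exactly the quantities squared in \eqref{11lem}, so it remains only to determine $\Var(\hat x_l)$ and to check independence.

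The second step is the single-entry variance. Each $\hat x_{ij}$ is the affine image $\frac{2}{N} N_{ij}-1$ of a binomial count with parameters given in \eqref{nij2}, namely success probability $p=(1+x_{ij})/2$. Hence $\Var(\hat x_{ij}) = \frac{4}{N^2}\Var(N_{ij}) = \frac{4}{N^2}\cdot N p(1-p)$, and since $p(1-p) = (1-x_{ij}^2)/4$ this collapses to $\Var(\hat x_l) = (1-x_l^2)/N$ for every $l \in H$, which supplies the factor $\frac{1-x_l^2}{N}$ appearing in the statement.

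The remaining, and genuinely load-bearing, step is to justify that the nine random variables $\{\hat x_l\}_{l \in H}$ are mutually independent, since only then does the cross-term-free identity $\Var\big(\sum_l \alpha_l \hat x_l\big) = \sum_l \alpha_l^2 \Var(\hat x_l)$ apply. This is where I would be most careful: the point is that each count $N_{ij}$ is obtained by performing the $i$-th measurement $N$ times on fresh copies of the $j$-th output qubit, so distinct index pairs $(i,j)$ use disjoint ensembles of copies and independent measurement apparatuses; thus the $N_{ij}$, and hence their affine images $\hat x_l$, are independent. With independence established, applying the variance rule to the linear combination from the first step and inserting the per-entry variance from the second step yields \eqref{11lem} directly. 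Everything outside the independence justification is routine algebra together with the standard binomial variance formula.
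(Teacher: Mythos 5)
Your proof is correct and follows the same route as the paper's: the paper's proof simply cites the binomial distribution of the $N_{ij}$, the independence of the entries of $\hat X$, and ``straightforward calculations,'' which are exactly the three steps you carry out explicitly. Your version just spells out the details (the reindexing of $\psi$ as a linear combination of the $\hat x_l$, the computation $\Var(\hat x_l)=(1-x_l^2)/N$, and the independence of the counts coming from disjoint ensembles of copies) that the paper leaves to the reader.
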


\begin{proof}
From (\ref{nij2}) and (\ref{hatxijdef}) it is easy to calculate the distribution of the elements of $\hat X.$
Using the well known properties of the binomial distribution and the independence of the different elements of $\hat X,$ straightforward calculations verify the statement.
\end{proof}

\begin{all}\label{bistoch}
The matrix $\{c_{kl} \}_{k,l \in H} \in \mathbf{M}_{9}(\R)$ defined by equation (\ref{ckldef}) is orthogonal, because $\Theta$ and $M$ are orthogonal matrices. From this follows that the Hadamard-square of this matrix, $\{c_{kl}^2 \}_{k,l \in H}$ is bistochastic. That is $\sum_{k \in H} c_{kl}^2=1 \ \forall l \in H,$ $\sum_{l \in H} c_{kl}^2=1 \ \forall k \in H.$
\end{all}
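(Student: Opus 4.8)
The claim has two parts: first, that the $9 \times 9$ matrix $C = \{c_{kl}\}_{k,l \in H}$ arising in the linear relation $\hat A = M \hat X \Theta^T$ is orthogonal; and second, that orthogonality forces the Hadamard-square $\{c_{kl}^2\}$ to be bistochastic. The plan is to treat these two parts separately, since the second is a purely algebraic consequence of the first.

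**Part 1: establishing orthogonality of $C$.**

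The first step is to make the map $\hat X \mapsto \hat A$ explicit. From equation (\ref{hatadef}) we have $\hat A = M \hat X \Theta^T$, which, read at the level of entries, is the linear relation $\hat a_k = \sum_{l} c_{kl} \hat x_l$ of (\ref{ckldef}) once we flatten the $3 \times 3$ index into a single index ranging over $H$. The clean way to see orthogonality is to recognize this as a Kronecker-product action: vectorizing, $\mathrm{vec}(\hat A) = (\Theta \otimes M)\,\mathrm{vec}(\hat X)$, so that $C = \Theta \otimes M$ (up to the ordering convention used to flatten $H$). Thus I would invoke the standard identity for tensor products of orthogonal matrices: if $\Theta$ and $M$ are orthogonal, then $\Theta \otimes M$ is orthogonal, because $(\Theta \otimes M)(\Theta \otimes M)^T = (\Theta \Theta^T) \otimes (M M^T) = I_3 \otimes I_3 = I_9$. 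Since $\Theta$ and $M$ are orthogonal by the parametrizations (\ref{inpdef}) and (\ref{merdef}), this gives $C C^T = I_9$, which is exactly the orthogonality asserted. (One could instead verify $\sum_k c_{ki} c_{kj} = \delta_{ij}$ directly by expanding the entries of $M \hat X \Theta^T$ and using the column-orthonormality of $M$ and $\Theta$, but the Kronecker viewpoint is cleaner and avoids index bookkeeping.)

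**Part 2: from orthogonality to bistochasticity.**

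Once $C$ is orthogonal, the bistochasticity of its Hadamard-square is immediate from the definition of orthogonality. Orthogonality means every column of $C$ is a unit vector and every row is a unit vector. The column condition $\sum_{k \in H} c_{kl}^2 = 1$ for each fixed $l$ says precisely that the column sums of $\{c_{kl}^2\}$ equal $1$; the row condition $\sum_{l \in H} c_{kl}^2 = 1$ for each fixed $k$ says the row sums equal $1$. Together with $c_{kl}^2 \geq 0$, these are exactly the defining properties of a bistochastic (doubly stochastic) matrix, so no further work is needed.

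**Anticipated obstacle.**

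The only genuine subtlety is the identification $C = \Theta \otimes M$: the entries $c_{kl}$ are defined implicitly through (\ref{ckldef}) after an unspecified flattening of the matrix index $H = \{11, 12, \dots, 33\}$, so I would need to confirm that the chosen ordering of $H$ is consistent between the vectorization of $\hat X$ and of $\hat A$, and that the convention ($\mathrm{vec}$ stacking columns versus rows) matches the $\Theta \otimes M$ versus $\Theta^T \otimes M$ form. This is purely a matter of fixing conventions and does not affect orthogonality either way, since both $\Theta \otimes M$ and any permutation-conjugate of it are orthogonal; hence the result is robust to the ordering choice. Everything past this identification is routine.
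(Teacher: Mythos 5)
Your proof is correct and follows the same line the paper intends: the paper states this proposition with only the inline justification ``because $\Theta$ and $M$ are orthogonal matrices'' and gives no further argument, and your identification $\mathrm{vec}(\hat A) = (\Theta \otimes M)\,\mathrm{vec}(\hat X)$, hence $C = \Theta\otimes M$ orthogonal, is exactly the clean way to make that one-liner rigorous. The passage from orthogonality to bistochasticity via unit rows and columns is also exactly what the paper asserts, so nothing is missing.
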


\begin{tetel}\label{ch_mx}
\be \label{alths}
f(\underline{\lambda},\underline{0},\underline{\tau},\underline{\vartheta},N) \geq \frac{1}{N}\left(6-(\lambda_1^2+\lambda_2^2+\lambda_3^2) \right),
\ee
and (\ref{alths}) holds with equality, if $\underline{\tau} =\underline{\vartheta}=\underline{0}.$
\end{tetel}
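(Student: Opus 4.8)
The plan is to exploit the fact that the reconstructed channel matrix $A(\hat\lambda_1,\dots,\hat\phi_x)$ coincides with the \emph{symmetric part} of $\hat A$. Indeed, the estimated parameters are read off from the spectral decomposition of $\hat A_s=\tfrac12(\hat A+\hat A^T)$, and the reconstruction $R\hat\Lambda R^{-1}=\sum_i\hat\lambda_i\,\mathbf{v_i}\mathbf{v_i}^T$ reproduces any symmetric matrix from its eigendata, so $A(\hat\lambda_1,\dots,\hat\phi_x)=\hat A_s$. Since the statement is taken at $\underline\phi=\underline0$ (justified by Proposition \ref{rotation}), the true matrix is $A=\Lambda$, which is symmetric; together with Lemma \ref{unbias1} this gives $\Ev\hat A_s=A$, i.e.\ $\hat A_s$ is unbiased. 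Writing $E:=\hat A-A$ and splitting it into its symmetric and antisymmetric parts $E_s,E_a$ (orthogonal in the Hilbert--Schmidt inner product), we have $\hat A_s-A=E_s$ and
\[
f=\Ev\|E_s\|^2=\Ev\|E\|^2-\Ev\|E_a\|^2 .
\]
Thus it suffices to compute $\Ev\|E\|^2$ exactly and to bound $\Ev\|E_a\|^2$ from above.

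The first term is immediate. Since $\Ev E=0$, $\Ev\|E\|^2=\sum_{k\in H}\Var(\hat a_k)$, and applying Lemma \ref{linear} with a single $d_k=1$ together with the bistochasticity of $\{c_{kl}^2\}$ (Proposition \ref{bistoch}) yields
\[
\Ev\|E\|^2=\frac1N\sum_{l\in H}(1-x_l^2)\sum_{k\in H}c_{kl}^2=\frac1N\sum_{l\in H}(1-x_l^2)=\frac1N\bigl(9-\|X\|^2\bigr).
\]
Because $X=M^{T}A\Theta=M^{T}\Lambda\Theta$ with $M,\Theta$ orthogonal, $\|X\|^2=\|\Lambda\|^2=\lambda_1^2+\lambda_2^2+\lambda_3^2$, so $\Ev\|E\|^2=\frac1N(9-\sum_i\lambda_i^2)$, independent of $\underline\tau,\underline\vartheta$. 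This is exactly the asserted bound but with $9$ in place of $6$; the entire content of the theorem is therefore that the antisymmetric part removes at most $3/N$.

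For the second term I would use the explicit coefficients $c_{(pq),(rs)}=M_{pr}\Theta_{qs}$ coming from $\hat A=M\hat X\Theta^{T}$. From $[E_a]_{pq}=\tfrac12(\hat a_{pq}-\hat a_{qp})$ we get $\|E_a\|^2=\tfrac12\sum_{p<q}(\hat a_{pq}-\hat a_{qp})^2$, and Lemma \ref{linear} applied to each difference, using $1-x_l^2\le1$, gives
\[
\Ev\|E_a\|^2\le\frac1{2N}\sum_{(r,s)}\sum_{p<q}\bigl(M_{pr}\Theta_{qs}-M_{qr}\Theta_{ps}\bigr)^2 .
\]
For fixed $(r,s)$, with $u=\underline m^{(r)}$ and $w=\underline\theta^{(s)}$, Lagrange's identity collapses the inner sum to $\|u\|^2\|w\|^2-\langle u,w\rangle^2=1-p_{rs}^2$, where $p_{rs}=\langle\underline m^{(r)},\underline\theta^{(s)}\rangle$. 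Summing over $(r,s)$ and using $\sum_{r,s}p_{rs}^2=\|M^{T}\Theta\|^2=3$ yields $\Ev\|E_a\|^2\le\frac1{2N}(9-3)=\frac3N$. Combining the two estimates gives $f\ge\frac1N(9-\sum_i\lambda_i^2)-\frac3N=\frac1N(6-\sum_i\lambda_i^2)$, which is (\ref{alths}).

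Finally, for equality: at $\underline\tau=\underline\vartheta=\underline0$ we have $M=\Theta=I$, hence $c_{(pq),(rs)}=\delta_{pr}\delta_{qs}$, $\hat a_{pq}=\hat x_{pq}$, and $x_{pq}=\lambda_q\delta_{pq}$; every off-diagonal $x_{pq}$ then vanishes, so each estimate $1-x_l^2\le1$ used above is tight and $\Ev\|E_a\|^2=3/N$ exactly, giving equality in (\ref{alths}). The main obstacle is the middle step: recognising that the reconstructed matrix is precisely $\hat A_s$, and then carrying out the antisymmetric-part variance bookkeeping. Once the Lagrange identity reduces the inner sum to $1-p_{rs}^2$, the remaining estimates are soft; the whole difficulty is isolating that the symmetrization discards exactly the $3/N$ carried by $E_a$.
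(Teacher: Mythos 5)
Your proof is correct and follows essentially the same route as the paper's: identify the reconstructed matrix with $\hat A_s$, write $f=\Ev\|E\|^2-\Ev\|E_a\|^2$, evaluate $\Ev\|E\|^2=\frac1N(9-\sum_i\lambda_i^2)$ via Lemma \ref{linear} and the bistochasticity of $\{c_{kl}^2\}$, and bound the antisymmetric contribution by $3/N$ using $1-x_l^2\le 1$. The only (cosmetic) difference is in that last bound: the paper sums over $l$ for each fixed pair $(m,n)$ and uses orthonormality of the rows of $\{c_{kl}\}$ to get $\sum_l(c_{ml}-c_{nl})^2=2$ per pair, whereas you sum in the opposite order and invoke Lagrange's identity to get $1-p_{rs}^2$ per $(r,s)$; both evaluations of the same double sum give the total $6$ and the identical equality analysis at $\underline\tau=\underline\vartheta=\underline 0$.
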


\begin{proof}
The distance of the channel matrix and its estimation:
\[
\|A(\lambda_1, \lambda_2, \lambda_3, \phi_z, \phi_y, \phi_x )-A(\hat \lambda_1, \hat \lambda_2, \hat \lambda_3, \hat \phi_z, \hat \phi_y, \hat \phi_x )\|^2=\|A(\lambda_1, \lambda_2, \lambda_3, \phi_z, \phi_y, \phi_x )-\hat A_s\|^2=
\]
\[
=(\hat a_{11}-a_{11})^2+(\hat a_{22}-a_{22})^2+(\hat a_{33}-a_{33})^2+2 \left(\frac{\hat a_{12}- a_{12}}{2}+ \frac{\hat a_{21}- a_{21}}{2} \right)^2+
\]
\[
+2 \left(\frac{\hat a_{13}- a_{13}}{2}+ \frac{\hat a_{31}- a_{31}}{2} \right)^2+
2 \left(\frac{\hat a_{23}- a_{23}}{2}+ \frac{\hat a_{32}- a_{32}}{2} \right)^2.
\]
From Lemma \ref{unbias1} the mean squared error of the $\hat a_{ij}$-s can be written as their variance, hence
\begin{align*}
&f(\underline{\lambda},\underline{0},\underline{\tau},\underline{\vartheta},N)=\Ev \left( \|A(\lambda_1, \lambda_2, \lambda_3, \phi_z, \phi_y, \phi_x )-A(\hat \lambda_1, \hat \lambda_2, \hat \lambda_3, \hat \phi_z, \hat \phi_y, \hat \phi_x )\|^2 \right)=
\\
&=\frac{1}{2}\bigg( \Var (\hat a_{12}+\hat a_{21})+\Var (\hat a_{13}+\hat a_{31})+\Var (\hat a_{23}+\hat a_{32})\bigg)+\sum_{i \in \{1,2,3\}} \Var (\hat a_{ii})=
\\
&=-\frac{1}{2}\bigg( \Var (\hat a_{12}-\hat a_{21})+\Var (\hat a_{13}-\hat a_{31})+\Var (\hat a_{23}-\hat a_{32})\bigg )+\sum_{i,j \in \{1,2,3\}} \Var (\hat a_{ij}).
\end{align*}

By Lemma \ref{linear}
\[
\Var(\hat a_k)=\sum_{l \in H} c_{kl}^{2} (\underline{\tau},\underline{\vartheta}) \frac{1-x_l^2}{N},
\]
hence using the bistochastic property of the matrix $\{c_{kl}^2 \}_{k,l \in H}$ described in Proposition \ref{bistoch}, we can see that
\be \label{sumvarhatak}
\sum_{k \in H} \Var(\hat a_k)= \sum_{k \in H} \sum_{l \in H} c_{kl}^{2} (\underline{\tau},\underline{\vartheta}) \frac{1-x_l^2}{N}=
\sum_{l \in H} \frac{1-x_l^2}{N}=\frac{1}{N} \left( 9 - \sum_{l \in H} x_l^2\right).
\ee

On the other hand, Lemma \ref{linear} shows that
\be
\Var \left(\hat a_m - \hat a_n \right)= \sum_{l \in H} (c_{ml}-c_{nl})^{2} \frac{1-x_l^2}{N}.
\ee
From the orthogonality of $\{c_{kl} \}_{k,l \in H}$ follows that 
$\sum_{l \in H} (c_{ml}-c_{nl})^{2}=2,$
since this expression is the norm-square of orthogonal unit vectors in $\R^9$. 
Therefore $\frac{1-x_l^2}{N} \leq \frac{1}{N}$ implies that $ \Var \left(\hat a_m - \hat a_n \right) \leq \frac{2}{N}.$  

Finally
\be \label{xlambda2}
\sum_{l \in H} x_l^2=\mathrm{Tr} X X^{T}=\mathrm{Tr} (M^{-1} A \Theta) (\Theta^{-1} A^T M)= \mathrm{Tr} A A^T= \lambda_1^{2}+\lambda_2^{2}+\lambda_3^{2}
\ee
hence
\[
f(\underline{\lambda},\underline{0},\underline{\tau},\underline{\vartheta},N) \geq -\frac{1}{2} \left( \frac{2}{N}+\frac{2}{N}+\frac{2}{N}\right)+ \frac{1}{N}\left(9-(\lambda_1^2+\lambda_2^{2}+\lambda_3^{2}) \right).
\]
which is equivalent to the inequality (\ref{alths}).
It is easy to check that in the  $\underline{\tau}=\underline{\vartheta}=\underline{0}$ case 
\begin{align*}
&\Var(\hat a_{ij})=\frac{1}{N}(1-\delta_{ij} \lambda_i^2), ~i,j \in \{1,2,3\}, 
\\
&\Var (\hat a_{12}-\hat a_{21})=\Var (\hat a_{13}-\hat a_{31})=\Var (\hat a_{23}-\hat a_{32})=\frac{2}{N},
\end{align*}
hence the minimum of $f(\underline{\lambda},\underline{0},\underline{\tau},\underline{\vartheta},N)$ is taken in $\underline{\tau}=\underline{\vartheta}=\underline{0}.$
\end{proof}


\subsection{The optimal estimation of the contraction parameters}

It is difficult to compute the loss functions $g$ and $h$ defined in (\ref{eff2}) and (\ref{eff1}),  because the $\lambda_i \ (i \in \{1,2,3\})$ and the $\phi_\alpha \ (\alpha \in \{z,y,x\})$ parameter estimators are non-linear. However, one can approximate these parameter estimators with the their first-order Taylor polynomial (recall that the $\hat \lambda_i, \hat \phi_\alpha$ estimators are $\mathbf{M}_3 (\R) \rightarrow \R$ functions).
Lemma \ref{unbias1} shows that  $\Ev \left( \hat a_{ij}\right)=a_{ij}$, while from Lemma \ref{linear} it follows that $\Var \left( \hat a_{ij}\right)= \underline{\underline{O}}\left( \frac{1}{N}\right)$. Hence the Taylor polynomial of functions $\hat \lambda_i, \hat \phi_\alpha$ with the base point $\Ev (\hat A)=A$ will be an appropriate approximation, if $N$ is large enough.
\par
So the linearized estimators can be written in the form

\begin{align}
 \label{lambdatildedef}
\tilde \lambda_i &:=\hat \lambda_{i}(A)+\inner{\mathrm{grad} \hat \lambda_{i}(A) }{\hat A-A} \ (\forall \ i \in \{1,2,3\}),
\\
\label{fitildedef}
\tilde \phi_{\alpha} &:=\hat \phi_{\alpha}(A)+\inner{\mathrm{grad} \hat \phi_{\alpha}(A) }{\hat A-A} \ (\forall \ \alpha \in \{z,y,x\}).
\end{align}

\begin{lem}[Unbiasedness 2]
The above defined $\tilde \lambda_i \ (i \in \{1,2,3\}), \tilde \phi_{\alpha} \ (\alpha \in \{z,y,x\})$ estimations of the channel parameters are unbiased.
\end{lem}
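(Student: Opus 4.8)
The plan is to exploit the fact that each linearized estimator is, by construction, an \emph{affine} function of the random matrix $\hat A$, so that the claim reduces to two facts already in hand: the unbiasedness of $\hat A$ (Lemma \ref{unbias1}) and the left-inverse identity (\ref{balinv2}). The first thing to observe is that in the definitions (\ref{lambdatildedef}) and (\ref{fitildedef}) the scalars $\hat\lambda_i(A)$, $\hat\phi_{\alpha}(A)$ together with the gradients $\mathrm{grad}\,\hat\lambda_i(A)$ and $\mathrm{grad}\,\hat\phi_{\alpha}(A)$ are all \emph{deterministic}: they are the nonlinear estimation maps and their derivatives evaluated at the fixed true channel matrix $A$, not at the random $\hat A$. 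Thus $\tilde\lambda_i$ and $\tilde\phi_{\alpha}$ depend on the experimental randomness solely through the increment $\hat A - A$ appearing in the linear term $\inner{\cdot}{\hat A-A}$.

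Next I would apply linearity of expectation, pulling the deterministic gradient out of the inner product:
\[
\Ev(\tilde\lambda_i)=\hat\lambda_i(A)+\inner{\mathrm{grad}\,\hat\lambda_i(A)}{\Ev(\hat A)-A},
\]
and identically for $\tilde\phi_{\alpha}$. By Lemma \ref{unbias1} one has $\Ev(\hat A)=A$, so the inner-product term vanishes identically, leaving $\Ev(\tilde\lambda_i)=\hat\lambda_i(A)$ and $\Ev(\tilde\phi_{\alpha})=\hat\phi_{\alpha}(A)$.

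It then remains to identify these constant terms with the true parameters. Here I would invoke the defining property (\ref{balinv2}) of the estimation map $T$, namely $T\circ A=\mathrm{Id}_{\mathcal{D}}$. Since the scalar functions $\hat\lambda_i$ and $\hat\phi_{\alpha}$ are precisely the coordinate components of $T$, evaluating them at a channel matrix $A$ that genuinely arises from parameters $(\lambda_1,\lambda_2,\lambda_3,\phi_z,\phi_y,\phi_x)\in\mathcal{D}$ returns those very parameters, i.e. $\hat\lambda_i(A)=\lambda_i$ and $\hat\phi_{\alpha}(A)=\phi_{\alpha}$. Combining this with the previous step yields $\Ev(\tilde\lambda_i)=\lambda_i$ and $\Ev(\tilde\phi_{\alpha})=\phi_{\alpha}$, which is the assertion.

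There is essentially no analytic obstacle, as the entire argument is linearity of expectation applied to an affine expression; the only point requiring care is conceptual. One must keep clearly in mind that the Taylor expansion is taken about the \emph{deterministic} base point $\Ev(\hat A)=A$, so that every coefficient in (\ref{lambdatildedef})–(\ref{fitildedef}) is a genuine constant and $\hat A-A$ is the sole random object; otherwise the step of extracting the gradient from the expectation would be unjustified. It is worth emphasising, too, that this is exactly why linearization restores unbiasedness: the original nonlinear estimators $\hat\lambda_i(\hat A)$ and $\hat\phi_{\alpha}(\hat A)$ are in general biased, and passing to the first-order approximation is what cancels that bias to leading order in $1/N$.
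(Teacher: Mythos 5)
Your argument is correct and coincides with the paper's own proof: both use linearity of expectation to pull the deterministic gradient out of the inner product, invoke Lemma \ref{unbias1} to make $\Ev(\hat A - A)=0$, and use the left-inverse identity (\ref{balinv2}) to conclude $\hat\lambda_i(A)=\lambda_i$ and $\hat\phi_\alpha(A)=\phi_\alpha$. Your added remarks on why the base point must be the deterministic matrix $A$ are a helpful clarification but do not change the substance.
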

\begin{proof}
It follows from (\ref{balinv2}) that we have $\hat \lambda_{i} \left(A(\underline{\lambda}, \underline{\phi})\right)=\lambda_i$ and $\hat \phi_\alpha \left(A(\underline{\lambda}, \underline{\phi})\right)=\phi_\alpha.$ Using the unbiasedness of $\hat A$ we get
\[
\Ev (\tilde \lambda_i)=\Ev(\hat \lambda_{i}(A))+\Ev\left(\inner{\mathrm{grad} \hat \lambda_{i}(A) }{\hat A-A} \right)=\hat \lambda_{i}(A)+\inner{\mathrm{grad} \hat \lambda_{i}(A) }{\Ev(\hat A-A)}=
\]
\[
=\hat \lambda_{i}(A)+0=\lambda_{i},
\]
and similarly
\[
\Ev (\tilde \phi_\alpha)=\Ev(\hat \phi_\alpha (A))+\Ev\left(\inner{\mathrm{grad} \hat \phi_\alpha (A) }{\hat A-A} \right)=\phi_\alpha.
\]
\end{proof}


According to Proposition \ref{rotation}, it is enough to consider the case when the Pauli channel described by $A$ has angle parameters $\phi_z=\phi_y=\phi_x=0.$

Let us define the mapping T with formula
\[
T: \mathbf{M}_3 (\R) \rightarrow \R^6; \ \hat A \mapsto (\hat \lambda_1, \hat \lambda_2, \hat \lambda_3, \hat \phi_z, \hat \phi_y, \hat \phi_x)
\]

\begin{lem}\label{deriv}
The non-vanishing components of $\mathrm{d}T\left(A(\underline{\lambda},\underline{0})\right)$ are
\[
\frac{\pd  \hat \lambda_1}{\pd \hat a_{11}}=\frac{\pd  \hat \lambda_2}{\pd \hat a_{22}}=\frac{\pd  \hat \lambda_3}{\pd \hat a_{33}}=1,  \ \frac{\pd  \hat \phi_z}{\pd \hat a_{12,s} }=\frac{1}{\lambda_1-\lambda_2}, \ \frac{\pd  \hat \phi_y}{\pd \hat a_{13,s} }=\frac{1}{\lambda_1-\lambda_3}, \ \frac{\pd  \hat \phi_x}{\pd \hat a_{23,s} }=\frac{1}{\lambda_2-\lambda_3},
\]
where $\hat a_{ij,s}=\frac{1}{2} \left( \hat a_{ij}+ \hat a_{ji} \right).$
\end{lem}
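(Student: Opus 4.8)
The plan is to treat $T$ as the composition of symmetrization, spectral decomposition, and angle read-off, and to differentiate it by first-order perturbation theory for the symmetric eigenvalue problem. Since every component of $T$ is computed from the symmetrized matrix $\hat A_s = \frac12(\hat A + \hat A^T)$ alone, each estimator factors through the linear map $\hat A \mapsto \hat A_s$. Consequently $\mathrm{d}T$ annihilates every antisymmetric direction, which is exactly why only the diagonal entries $\hat a_{ii}$ and the symmetrized off-diagonal entries $\hat a_{ij,s} = (\hat A_s)_{ij}$ can carry a nonzero derivative. I would record this observation first and then work entirely on the six-dimensional space of symmetric matrices in the coordinates $\{\hat a_{ii}, \hat a_{ij,s}\}$.

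At the base point $\hat A = A(\underline{\lambda},\underline{0}) = \Lambda = \mathrm{diag}(\lambda_1,\lambda_2,\lambda_3)$ the matrix is already symmetric and diagonal, and on the domain $\mathcal{D}$ of Theorem \ref{domain} we have $\lambda_1 > \lambda_2 > \lambda_3$, so the spectrum is simple and the sign-normalized eigenvectors are the standard basis vectors $e_1,e_2,e_3$. Writing $B = \hat A_s - \Lambda$ for the symmetric perturbation (with $B_{ii} = \hat a_{ii}-\lambda_i$ and $B_{ij} = \hat a_{ij,s}$), the standard non-degenerate perturbation formulas give, to first order,
\[
\hat\lambda_i = \lambda_i + B_{ii}, \qquad \mathbf{v}_i = e_i + \sum_{j \neq i} \frac{B_{ji}}{\lambda_i - \lambda_j}\, e_j .
\]
The eigenvalue expansion immediately yields $\pd\hat\lambda_i/\pd\hat a_{ii} = 1$ with no other first-order dependence (off-diagonal entries move eigenvalues only at second order), establishing the three contraction-parameter derivatives.

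It remains to convert the eigenvector expansion into the angle derivatives. By (\ref{alap}) the eigenvectors belonging to $\lambda_1$ and $\lambda_2$ are $\mathbf{v}_1 = R_z R_y R_x\, e_1$ and $\mathbf{v}_2 = R_z R_y R_x\, e_2$; computing these columns gives
\[
\mathbf{v}_1 = (\cos\phi_y\cos\phi_z,\ \cos\phi_y\sin\phi_z,\ \sin\phi_y)^T, \qquad (\mathbf{v}_2)_3 = \cos\phi_y\sin\phi_x .
\]
Linearizing at $\underline{\phi}=\underline{0}$ reduces the geometric definitions of the angles to $\phi_z \approx (\mathbf{v}_1)_2$, $\phi_y \approx (\mathbf{v}_1)_3$ and $\phi_x \approx (\mathbf{v}_2)_3$. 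Substituting the eigenvector perturbation (and using $B_{ji}=B_{ij}=\hat a_{ij,s}$) then produces $\phi_z \approx \hat a_{12,s}/(\lambda_1-\lambda_2)$, $\phi_y \approx \hat a_{13,s}/(\lambda_1-\lambda_3)$ and $\phi_x \approx \hat a_{23,s}/(\lambda_2-\lambda_3)$, which are precisely the claimed nonzero derivatives.

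The main obstacle is this last step: one must verify that the geometric definitions (the spherical angles of $\mathbf{v}_1$, and the angle of $\mathbf{v}_2$ with the line $\{z=0\}\cap\mathbf{v}_1^{\perp}$) really linearize to the single eigenvector components above, and in particular that $\phi_x$ picks up no first-order contribution either from the $e_1$-component of $\delta\mathbf{v}_2$ or from $\phi_z,\phi_y$. This follows because $\sin\phi_x = (\mathbf{v}_2)_3/\cos\phi_y$, so the $\phi_y$-correction is second order, and because the in-plane rotation defining $\phi_x$ affects the $\mathbf{v}_1$-direction component of $\mathbf{v}_2$ only at second order. A secondary technical point is fixing the eigenvector sign and normalization consistently, namely by continuity from $e_i$ together with $\inner{e_i}{\delta\mathbf{v}_i}=0$, which is exactly the convention under which the perturbation formula above is valid.
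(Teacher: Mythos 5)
Your proof is correct, but it takes a genuinely different route from the paper. The paper exploits the left-inverse relation $T \circ A = \mathrm{Id}_{\mathcal{D}}$: it differentiates the explicit parametrization $A(\underline{\lambda},\underline{\phi})$ at $\underline{\phi}=\underline{0}$, finds $\mathrm{d}A = \mathrm{Diag}(1,1,1,\lambda_1-\lambda_2,\lambda_1-\lambda_3,\lambda_2-\lambda_3)$ in the coordinates $(\hat a_{11},\hat a_{22},\hat a_{33},\hat a_{12,s},\hat a_{13,s},\hat a_{23,s})$, and reads off $\mathrm{d}T$ by inverting. You instead differentiate $T$ directly via first-order perturbation theory for the symmetric eigenvalue problem at the diagonal base point $\Lambda$. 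The two calculations are dual to each other and of comparable length, but each buys something the other leaves implicit. The paper's chain-rule argument is shorter, yet since $\mathrm{d}A$ is a $9\times 6$ map, the relation $\mathrm{d}T\circ\mathrm{d}A=\mathrm{Id}$ only pins down $\mathrm{d}T$ on the $6$-dimensional image of $\mathrm{d}A$; the claim that all \emph{other} components of $\mathrm{d}T$ vanish (in particular on antisymmetric directions) is not a consequence of that relation alone. Your opening observation --- that $T$ factors through $\hat A \mapsto \hat A_s$, so $\mathrm{d}T$ annihilates the antisymmetric subspace --- supplies exactly the missing piece, and your perturbation formulas then determine $\mathrm{d}T$ on the full $9$-dimensional space. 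The price is that you must verify the eigenvector-to-angle conversion (that the spherical-angle definitions linearize to single components of $\delta\mathbf{v}_1$ and $\delta\mathbf{v}_2$), which you do handle, including the second-order nature of the $\cos\phi_y$ correction in $\sin\phi_x=(\mathbf{v}_2)_3/\cos\phi_y$ and the sign/normalization convention $\inner{e_i}{\delta\mathbf{v}_i}=0$. Both arguments rely on the simplicity of the spectrum ($\lambda_1>\lambda_2>\lambda_3$ on the interior of $\mathcal{D}$), which you state explicitly and the paper uses tacitly.
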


\begin{proof}
 The
parameter estimation $T$ is the left-inverse of the channel parametrization
\[
A: \mathcal{D} \rightarrow \mathbf{M}_3 (\R); \ (\lambda_1,\lambda_2, \lambda_3, \phi_z, \phi_y, \phi_x) \mapsto A(\lambda_1,\lambda_2, \lambda_3, \phi_z, \phi_y, \phi_x),
\]
hence
\be
\mathrm{d}T\left(A(\underline{\lambda},\underline{\phi})\right)= \left( \mathrm{d}A(\underline{\lambda},\underline{\phi})\right)^{-1}.
\ee
We can calculate the elements of $\mathrm{d}A(\underline{\lambda},\underline{\phi})$ easily. For example: 
\[
A( \lambda_1,\lambda_2, \lambda_3,\phi_z, 0, 0)=
\left(
\begin{array}{ccc}
\lambda_1\cos^2{\phi_z}+\lambda_2 \sin^2{\phi_z} & (\lambda_1-\lambda_2)\sin{\phi_z}\cos{\phi_z} & 0\\
(\lambda_1-\lambda_2)\sin{\phi_z}\cos{\phi_z} & \lambda_1\sin^2{\phi_z}+\lambda_2 \cos^2{\phi_z} & 0 \\
0 & 0 & 1
\end{array}
\right),
\]
so
\[
\frac{\pd  A}{\pd \phi_z }( \lambda_1,\lambda_2, \lambda_3,0, 0, 0)=
\left(
\begin{array}{ccc}
0 & (\lambda_1-\lambda_2)& 0\\
(\lambda_1-\lambda_2) & 0 & 0 \\
0 & 0 & 0
\end{array}
\right).
\]
From similar calculations we get that
\[
\mathrm{d}A(\lambda_1,\lambda_2, \lambda_3,0, 0, 0)= \mathrm{Diag}(1,1,1, \lambda_1-\lambda_2, \lambda_1-\lambda_3, \lambda_2-\lambda_3).
\]
from which the statement follows.
\end{proof}

 Let us substitute the result of Lemma \ref{deriv} into the definition of $\tilde \lambda_i$ 
(see (\ref{lambdatildedef})
),
 and use the unbiasedness of $\tilde \lambda_i, 
$ and $\hat A.$ Then the loss function defined by the \emph{linearized} parameter estimators will have the following form:

\begin{align}
\nonumber \tilde g(\underline{\lambda},\underline{0},\underline{\tau},\underline{\vartheta},N):&=\Ev \left((\tilde \lambda_1- \lambda_1)^2+(\tilde \lambda_2- \lambda_2)^2+(\tilde \lambda_3- \lambda_3)^2 \right)=
\\
&=\Var (\hat a_{11})+\Var (\hat a_{22})+\Var (\hat a_{33}).
\end{align}

This function is asymptotically equal to $g$, but easier to handle, thus, we can perform the analytical optimization.

\begin{tetel}\label{contraction}
\be \label{altkont}
\tilde g(\underline{\lambda},\underline{0},\underline{\tau},\underline{\vartheta},N) \geq \frac{1}{N}\left(3-(\lambda_1^2+\lambda_2^2+\lambda_3^2)\right),
\ee
and (\ref{altkont}) holds with equality, if $\underline{\tau} =\underline{\vartheta}=0.$
\end{tetel}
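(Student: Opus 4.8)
The plan is to follow the same strategy as in the proof of Theorem \ref{ch_mx}, but the task is now simpler, because $\tilde g$ is just the sum of the three diagonal variances with no off-diagonal symmetrization term to subtract. The starting point is the identity established immediately above the theorem, namely $\tilde g(\underline{\lambda},\underline{0},\underline{\tau},\underline{\vartheta},N)=\Var(\hat a_{11})+\Var(\hat a_{22})+\Var(\hat a_{33})$, which I take as given.

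First I would apply Lemma \ref{linear} to each diagonal variance, writing $\Var(\hat a_k)=\sum_{l\in H}c_{kl}^2(\underline{\tau},\underline{\vartheta})\,\frac{1-x_l^2}{N}$ for the three diagonal indices $k\in D:=\{11,22,33\}\subset H$. Summing over $k\in D$ and introducing the partial column weights $w_l:=\sum_{k\in D}c_{kl}^2$ gives $\tilde g=\frac{1}{N}\sum_{l\in H}w_l\,(1-x_l^2)$. The whole argument then reduces to controlling these weights.

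Both facts I need about $w_l$ come from the bistochasticity in Proposition \ref{bistoch}. Since the rows of $\{c_{kl}^2\}$ sum to one, $\sum_{l\in H}w_l=\sum_{k\in D}\sum_{l\in H}c_{kl}^2=|D|=3$. Since the columns also sum to one and $D\subset H$, each partial column sum obeys $0\le w_l\le\sum_{k\in H}c_{kl}^2=1$. Hence $\tilde g=\frac{1}{N}\left(3-\sum_{l\in H}w_l x_l^2\right)$, and because $w_l\le 1$ while $x_l^2\ge 0$ we get $\sum_{l\in H}w_l x_l^2\le\sum_{l\in H}x_l^2=\lambda_1^2+\lambda_2^2+\lambda_3^2$, where the last equality is exactly (\ref{xlambda2}). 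This yields the bound (\ref{altkont}).

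For the equality statement I would substitute $\underline{\tau}=\underline{\vartheta}=\underline{0}$, so that $M=\Theta=I$; then (\ref{hatadef}) gives $c_{(ii),(pq)}=M_{ip}\Theta_{iq}=\delta_{ip}\delta_{iq}$, the only surviving term in each diagonal variance is the $(ii)$-th, and $x_{ii}=\lambda_i$, so $\Var(\hat a_{ii})=(1-\lambda_i^2)/N$ and (\ref{altkont}) holds with equality. The only genuinely non-routine step is the partial-column-sum bound $w_l\le 1$; everything else is bookkeeping with the bistochastic matrix. I expect no real obstacle, since the argument is a stripped-down version of the proof of Theorem \ref{ch_mx}, and the fact that only the diagonal entries of $\hat A$ enter $\tilde g$ is precisely why the constant $3-(\lambda_1^2+\lambda_2^2+\lambda_3^2)$ replaces the $6-(\lambda_1^2+\lambda_2^2+\lambda_3^2)$ of the channel-matrix case.
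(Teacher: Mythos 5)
Your proof is correct and rests on exactly the same ingredients as the paper's — Lemma \ref{linear}, the bistochasticity of $\{c_{kl}^2\}$ from Proposition \ref{bistoch}, and the trace identity (\ref{xlambda2}) — so it is essentially the same argument. The only (cosmetic) difference is that you bound the diagonal sum directly via the partial column weights $w_l=\sum_{k\in D}c_{kl}^2$ with $\sum_l w_l=3$ and $w_l\le 1$, whereas the paper writes it as the full sum (\ref{sumvarhatak}) minus the off-diagonal variances and bounds the latter by $6/N$; unwinding either version yields the identical inequality $\sum_l w_l x_l^2\le\sum_l x_l^2$.
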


\begin{proof}

\be \label{varf2}
\tilde g(\underline{\lambda},\underline{0},\underline{\tau},\underline{\vartheta},N)=\sum_{i \in \{1,2,3\}} \Var (\hat a_{ii})
=\sum_{i,j \in \{1,2,3\}} \Var (\hat a_{ij})-\sum_{i \neq j} \Var (\hat a_{ij}).
\ee
(\ref{11lem}) shows that $\Var \left(\hat a_{ij}\right) \leq \frac{1}{N} \ (\forall i,j),$ since $\frac{1-x_{ij}^2}{N} \leq \frac{1}{N}$ and the $\{c_{kl}^2 \}_{k,l \in H}$ matrix is bistochastic. Therefore
\[
\sum_{i \neq j} \Var (\hat a_{ij}) \leq \frac{6}{N}.
\]
Now, by the results of (\ref{sumvarhatak}) and (\ref{xlambda2}) we can write the following inequality based on the equation (\ref{varf2}):
\[
\Var (\hat a_{11})+\Var (\hat a_{22})+\Var (\hat a_{33}) \geq \frac{1}{N}\left(9-(\lambda_1^2+\lambda_2^2+\lambda_3^2) \right)- \frac{6}{N},
\]
and this is the statement of the theorem.
It is easy to see that in the  $\underline{\tau}=\underline{\vartheta}=\underline{0}$ case $\Var(\hat a_{11})=\frac{1}{N}(1-\lambda_1^2), \ \Var(\hat a_{22})=\frac{1}{N}(1-\lambda_2^2), \ \Var(\hat a_{33})=\frac{1}{N}(1-\lambda_3^2),$ hence $\tilde g(\underline{\lambda},\underline{0},\underline{\tau},\underline{\vartheta},N)$ is minimal in $\underline{\tau} =\underline{\vartheta}=0.$
\end{proof}

\subsection{The optimal estimation of the angle parameters}

From Lemma \ref{deriv} we can determine the loss function that measures the accuracy of the estimation of the angle parameters:
\[
\tilde h(\underline{\lambda},\underline{0},\underline{\tau},\underline{\vartheta},N)=\Ev \left((\tilde \phi_z-\phi_z)^2+(\tilde \phi_y-\phi_y)^2+(\tilde \phi_x-\phi_x)^2 \right)=
\]
\[
=\frac{1}{4(\lambda_1-\lambda_2)^2} \Var (\hat a_{12}+\hat a_{21})+\frac{1}{4(\lambda_1-\lambda_3)^2} \Var (\hat a_{13}+\hat a_{31})+\frac{1}{4(\lambda_2-\lambda_3)^2} \Var (\hat a_{23}+\hat a_{32}).
\]
Using the result of Lemma \ref{linear}, $ \Var (\hat a_{12}+\hat a_{21}),$ $\Var (\hat a_{13}+\hat a_{31})$ and $\Var (\hat a_{23}+\hat a_{32})$ can be expressed as a function of $\underline{\tau}, \underline{\vartheta}$ and $X=\{x_l\}_{l \in H}.$
The definition of matrices $\Theta$ and $M$ and equations (\ref{csathat}) and (\ref{xmeghat}) show that $X$ can be written as a function of $\underline{\tau}, \underline{\vartheta}$ and $\underline{\lambda}$ (recall that we fixed $\underline{\phi}=\underline{0}$).
Hence, $\tilde h(\underline{\lambda},\underline{0},\underline{\tau},\underline{\vartheta},N)$ can be written in a quite extensive, but explicit, closed form.

For fixed $\underline{\lambda}$ and $N,$ the optimization of $\tilde h(\underline{\lambda},\underline{0},\underline{\tau},\underline{\vartheta},N)$ is a minimization problem with six variables ($\tau_z, \tau_y, \tau_x,  \vartheta_z, \vartheta_y, \vartheta_x$), but unfortunately this optimization problem can not be solved analytically. \par
However, we can formulate conjectures based on numerical optimization computations.


\begin{sejtes}\label{conj}
For any fixed parameters $\underline{\lambda}, N$ 
\bit
\item if $\tilde h(\underline{\lambda},\underline{0},\underline{\tau},\underline{\vartheta},N)$ is minimal at $(\underline{\tau}_{opt},\underline{\vartheta}_{opt}),$ then $\underline{\tau}_{opt}=\underline{\vartheta}_{opt},$
\item the estimation strategies described by parameters $\underline{\tau}_1=\underline{\vartheta}_1=(\frac{\pi}{4}, \frac{\pi}{4},0)$ and $\underline{\tau}_2=\underline{\vartheta}_2=(\frac{\pi}{4},0, \frac{\pi}{4})$ are \emph{nearly} optimal.
\eit
\end{sejtes}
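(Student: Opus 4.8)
Since the final statement is offered as a \emph{conjecture} backed by numerical optimization, a fully rigorous proof is probably out of reach with elementary methods; what follows is the route I would attempt. The first move is to make $\tilde h$ completely explicit. With $\underline{\phi}=\underline{0}$ we have $A=\Lambda$, so $X=M^{T}\Lambda\Theta$ and hence $x_{pq}=\sum_{r}\lambda_r M_{rp}\Theta_{rq}$, while the coefficients in (\ref{ckldef}) read $c_{(ij),(pq)}=M_{ip}\Theta_{jq}$ because $\hat A=M\hat X\Theta^{T}$. Substituting these into (\ref{11lem}) converts each of the three variances $\Var(\hat a_{12}+\hat a_{21})$, $\Var(\hat a_{13}+\hat a_{31})$, $\Var(\hat a_{23}+\hat a_{32})$ into a trigonometric polynomial in $(\underline{\tau},\underline{\vartheta})$, so that $\tilde h$ becomes a closed-form function of six angles on a compact domain. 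This is exactly the "extensive but explicit" form referred to in the text, and everything below is based on it.

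For the first bullet I would exploit a symmetry. Interchanging the roles $M\leftrightarrow\Theta$ (that is, $\underline{\tau}\leftrightarrow\underline{\vartheta}$) sends $x_{pq}\mapsto x_{qp}$, and sends the coefficient pair $\bigl(c_{(12),(pq)},c_{(21),(pq)}\bigr)=\bigl(M_{1p}\Theta_{2q},\,M_{2p}\Theta_{1q}\bigr)$ to $\bigl(\Theta_{1p}M_{2q},\,\Theta_{2p}M_{1q}\bigr)$. Relabelling the summation index $(p,q)\mapsto(q,p)$ in (\ref{11lem}) then leaves $(c_{(12),(pq)}+c_{(21),(pq)})^2(1-x_{pq}^2)$ unchanged, so $\Var(\hat a_{12}+\hat a_{21})$ is invariant, and the same holds for the other two terms; therefore $\tilde h(\underline{\lambda},\underline{0},\underline{\tau},\underline{\vartheta},N)=\tilde h(\underline{\lambda},\underline{0},\underline{\vartheta},\underline{\tau},N)$. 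This symmetry already forces minimizers to occur in mirror pairs $(\underline{\tau}^{*},\underline{\vartheta}^{*})$ and $(\underline{\vartheta}^{*},\underline{\tau}^{*})$. To upgrade this to $\underline{\tau}^{*}=\underline{\vartheta}^{*}$ I would try to prove a midpoint estimate $\tilde h\bigl(\cdot,\tfrac{\underline{\tau}+\underline{\vartheta}}{2},\tfrac{\underline{\tau}+\underline{\vartheta}}{2}\bigr)\le\tilde h(\cdot,\underline{\tau},\underline{\vartheta})$, i.e.\ a convexity or Schur-type majorization in the antidiagonal direction, which would pin every optimum onto the diagonal.

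For the second bullet I would sandwich the minimum. An upper bound comes from evaluating the closed form of $\tilde h$ at the two proposed points $\underline{\tau}_1=\underline{\vartheta}_1=(\tfrac{\pi}{4},\tfrac{\pi}{4},0)$ and $\underline{\tau}_2=\underline{\vartheta}_2=(\tfrac{\pi}{4},0,\tfrac{\pi}{4})$, which is a routine trigonometric computation. A matching lower bound for $\min\tilde h$ would be extracted from the orthogonality and bistochasticity constraints on $\{c_{kl}\}$ (Proposition \ref{bistoch}), in the spirit of the variance bounds used in the proofs of Theorem \ref{ch_mx} and Theorem \ref{contraction}, now carried through with the $\lambda$-dependent weights $\tfrac{1}{4(\lambda_i-\lambda_j)^2}$. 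Comparing the two bounds uniformly over the admissible parameter region (\ref{cpfelt}) would quantify how close the proposed strategies come to the true optimum.

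The main obstacle is twofold. First, even after the symmetry reduction, $\tilde h$ is a genuinely non-convex function of six angles with $\underline{\lambda}$-dependent coefficients, so the midpoint/majorization step needed for the first bullet is not guaranteed and may fail for extreme $\underline{\lambda}$; ruling out non-diagonal local minima is precisely what the numerics suggest but do not prove. Second, the predicate "nearly optimal" is not a precise mathematical statement, so the conjecture must first be made quantitative — for instance by bounding the relative excess loss $\bigl(\tilde h(\underline{\tau}_k,\underline{\vartheta}_k)-\min\tilde h\bigr)/\min\tilde h$ by an explicit constant over the domain (\ref{cpfelt}) — before the sandwich argument above can even be said to establish it.
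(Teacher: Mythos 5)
You should first note that the paper itself does not prove Conjecture \ref{conj}: it is explicitly labelled a conjecture, the surrounding text states that the six-variable minimization of $\tilde h$ cannot be solved analytically, and the only supporting evidence offered consists of two numerical examples at specific values of $\underline{\lambda}$ and $N$. So there is no proof in the paper to match your proposal against, and your own assessment --- that a rigorous argument is out of reach by elementary means --- agrees with the authors'. Within that caveat, your route is genuinely different in character from the paper's. Your setup is correct: with $\underline{\phi}=\underline{0}$ one has $A=\Lambda$, $x_{pq}=\sum_r \lambda_r M_{rp}\Theta_{rq}$ and $c_{(ij),(pq)}=M_{ip}\Theta_{jq}$, and the swap $M\leftrightarrow\Theta$ combined with the relabelling $(p,q)\mapsto(q,p)$ in (\ref{11lem}) does leave each term $\bigl(c_{(ij),(pq)}+c_{(ji),(pq)}\bigr)^2(1-x_{pq}^2)$ invariant, so the symmetry $\tilde h(\underline{\lambda},\underline{0},\underline{\tau},\underline{\vartheta},N)=\tilde h(\underline{\lambda},\underline{0},\underline{\vartheta},\underline{\tau},N)$ is a correct and worthwhile structural observation that the paper does not record; it explains why minimizers should at least occur in mirror pairs.

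However, the two load-bearing steps of your plan are missing, as you yourself concede. The midpoint inequality that would force $\underline{\tau}_{opt}=\underline{\vartheta}_{opt}$ is not established, and $\tilde h$ is not obviously convex along the antidiagonal direction, since the weights $\tfrac{1}{4(\lambda_i-\lambda_j)^2}$ can make the three variance terms compete in opposite directions. For the second bullet, no quantitative lower bound matching the evaluation at $\underline{\tau}_1,\underline{\tau}_2$ is produced; note moreover that the crude bounds in the style of Theorems \ref{ch_mx} and \ref{contraction} are attained at $\underline{\tau}=\underline{\vartheta}=\underline{0}$, which the paper's own examples show is \emph{not} where $\tilde h$ is minimal (e.g.\ $\tilde h(\underline{0},\underline{0})=0.05$ versus the minimum $0.03634$), so that technique cannot transfer without substantial modification. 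Your proposal is therefore a programme rather than a proof: it neither establishes the conjecture nor contradicts the paper, whose own justification is purely numerical, but the symmetry lemma is a small genuine advance over what the paper offers.
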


The following examples taken from our numerical optimizations studies 
indicate the validity of Conjecture \ref{conj}. 
\begin{pelda}
Let us fix the following parameters: $\lambda_1=0.8, \lambda_2=0.65, \lambda_3=0.5$ and $N=1000$. 
Then the optimal $\left(\underline{\tau}_{opt},\underline{\vartheta}_{opt}\right)$ (we can calculate them numerically) does not show any regularity except $\underline{\tau}_{opt}=\underline{\vartheta}_{opt}$, with
\[
\min_{\underline{\tau}, \ \underline{\vartheta}} \tilde h=\tilde h (\underline{\tau}_{opt},\underline{\vartheta}_{opt})=0.03634.
\]
We can calculate the values at the two points given in Conjecture \ref{conj}:
$ \tilde h (\underline{\tau}_1,\underline{\vartheta}_1)=\tilde h (\underline{\tau}_2,\underline{\vartheta}_2)=0.03676$.
So the difference can be considered small compared to $\tilde h(\underline{0},\underline{0})=0.05.$
\end{pelda}

\begin{pelda}
The situation is similar for the parameters $\lambda_1=0.9, \lambda_2=0.67, \lambda_3=0.6$ and $N=1000$.
For the optimal input and measurement directions we have $\underline{\tau}_{opt}=\underline{\vartheta}_{opt}$ from numerical optimization, with
\[
\min_{\underline{\tau}, \ \underline{\vartheta}} \tilde h=\tilde h (\underline{\tau}_{opt},\underline{\vartheta}_{opt})=0.01659.
\]
In comparison, $ \tilde h (\underline{\tau}_1,\underline{\vartheta}_1)=\tilde h (\underline{\tau}_2,\underline{\vartheta}_2)=0.01675$ and  $\tilde h (\underline{0},\underline{0})=0.02446.$
\end{pelda}

\paragraph*{Pauli channel with known parameters in one direction}
Let us assume that we have some information about the Pauli channel:
\be
v_3=\sigma_3 \text{ and } \lambda_3=0.
\ee
In this case the channel matrix has the following simplified form
\be
A(\lambda_1,\lambda_2, \phi)=R(\phi) \Lambda(\lambda_1,\lambda_2) R(\phi)^{-1}.
\ee
Now its is easy to show that the input states and the von Neumann measurements should be orthogonal vectors in the plane spanned by $\sigma_1$ and $\sigma_2$. Hence, the input states and the measurements can be parametrized using a single angle parameter $\vartheta$ and $\tau$, respectively:
\[ \Theta=\left[
\begin{array}{ccc}
\theta_{11} & \theta_{21}& 0\\
\theta_{12} & \theta_{22} & 0 \\
0 & 0 & 1
\end{array}\right]=R(\vartheta), \quad M=\left[ \begin{array}{ccc}
m_{11} & m_{21}& 0\\
m_{12} & m_{22} & 0 \\
0 & 0 & 1
\end{array}\right]=R(\tau).
\]

\par

Then we have to solve the following minimization problem with two variables  ($\tau, \vartheta$) for fixed $\lambda_1, \lambda_2,N$ values:
\be
\tilde h_2(\lambda_1, \lambda_2,0, \tau, \vartheta, N)=\Ev \left(\tilde \phi-\phi \right)^2.
\ee
It can be solved analytically, but the proof is rather technical than difficult. The result is formulated in the following theorem:
\begin{tetel}
Assume that $\lambda_1> \lambda_2$ and $\lambda_1 \neq - \lambda_2.$
\ben
\item If $(\lambda_1+\lambda_2)^2 \geq 2 (\lambda_1-\lambda_2)^2,$
then $\tilde h_2(\lambda_1, \lambda_2, 0, \tau, \vartheta, N)$ is minimal if and only if 
\[
\tau_{opt}=\vartheta_{opt}=\frac{\pi}{4} \quad  \left(\mathrm{mod} \ \frac{\pi}{2}\right).
\] The minimal value of the loss function is
\[
\tilde h_2(\lambda_1, \lambda_2, 0, \frac{\pi}{4}, \frac{\pi}{4}, N)=\frac{1}{4(\lambda_1-\lambda_2)^2} \frac{1}{2N}\left(4-(\lambda_1+\lambda_2)^2\right).
\]
\item If $(\lambda_1+\lambda_2)^2 < 2 (\lambda_1-\lambda_2)^2,$
then $\tilde h_2(\lambda_1, \lambda_2, 0, \tau, \vartheta, N)$ is minimal if and only if
\[
\tau_{opt}=\vartheta_{opt}=x \textrm{ or } \frac{\pi}{2}- x \quad \left(\mathrm{mod} \ \frac{\pi}{2}\right),
\]
where $x=\frac{1}{4}\arccos{\left(-\frac{(\lambda_1+\lambda_2)^2}{2 (\lambda_1-\lambda_2)^2}\right)}$.
The minimal value is now  
\[
\tilde h_2(\lambda_1, \lambda_2, 0, \tau_{opt}, \vartheta_{opt}, N)=\frac{1}{4(\lambda_1-\lambda_2)^2} \frac{1}{2N}\left(4-(\lambda_1^2+\lambda_2^2)-\frac{1}{8} \frac{(\lambda_1+\lambda_2)^4}{(\lambda_1-\lambda_2)^2}\right).
\]
\een
\end{tetel}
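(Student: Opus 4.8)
The plan is to reduce the whole problem to a single scalar variance and then to a trigonometric maximization. By the rotational invariance (Proposition~\ref{rotation}) I first put $\phi=0$. The two-dimensional analogue of Lemma~\ref{deriv} has $1/(\lambda_1-\lambda_2)$ as its only non-vanishing derivative, so the linearized estimator obeys $\tilde\phi-\phi=(\hat a_{12,s}-a_{12,s})/(\lambda_1-\lambda_2)$ with $\hat a_{12,s}=\tfrac12(\hat a_{12}+\hat a_{21})$. Since $\hat A$ is unbiased (Lemma~\ref{unbias1}), this immediately yields
\[
\tilde h_2(\lambda_1,\lambda_2,0,\tau,\vartheta,N)=\frac{1}{4(\lambda_1-\lambda_2)^2}\,\Var\!\left(\hat a_{12}+\hat a_{21}\right),
\]
so the task becomes the minimization of $\Var(\hat a_{12}+\hat a_{21})$ over $\tau,\vartheta$.

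Next I would make this variance explicit. Writing $\hat A=M\hat X\Theta^{T}$ with $M=R(\tau)$ and $\Theta=R(\vartheta)$ (everything restricted to the $\sigma_1$--$\sigma_2$ block), a direct expansion gives
\[
\hat a_{12}+\hat a_{21}=\sin(\tau+\vartheta)\,(\hat x_{11}-\hat x_{22})+\cos(\tau+\vartheta)\,(\hat x_{12}+\hat x_{21}).
\]
Since the four estimators $\hat x_{ij}$ are independent with $\Var(\hat x_{ij})=(1-x_{ij}^2)/N$, this produces
\[
\Var\!\left(\hat a_{12}+\hat a_{21}\right)=\frac{2}{N}-\frac{1}{N}\Big(\sin^2(\tau+\vartheta)(x_{11}^2+x_{22}^2)+\cos^2(\tau+\vartheta)(x_{12}^2+x_{21}^2)\Big).
\]
I would then substitute the entries of $X=R(\tau)^{T}\Lambda R(\vartheta)$, using product-to-sum identities to express $x_{11},x_{22}$ through $\cos(\tau\pm\vartheta)$ and $x_{12},x_{21}$ through $\sin(\tau\pm\vartheta)$ (with $\sum x_{ij}^2=\lambda_1^2+\lambda_2^2$ as a consistency check). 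Setting $p=\tfrac12(\lambda_1+\lambda_2)^2$ and $q=\tfrac12(\lambda_1-\lambda_2)^2$, the bracket collapses to
\[
F(\tau,\vartheta)=\tfrac{p}{2}\big(\sin^2 2\tau+\sin^2 2\vartheta\big)+\tfrac{q}{2}\sin^2(2\tau+2\vartheta),
\]
so minimizing $\tilde h_2$ is the same as maximizing $F$.

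The core of the proof is this maximization. Setting $\alpha=4\tau$, $\beta=4\vartheta$ and using $\sin^2 x=\tfrac12(1-\cos 2x)$, maximizing $F$ is equivalent to minimizing $g(\alpha,\beta)=p(\cos\alpha+\cos\beta)+q\cos(\alpha+\beta)$ on the torus. The stationarity equations $p\sin\alpha=-q\sin(\alpha+\beta)=p\sin\beta$ force $\sin\alpha=\sin\beta$ (here $p>0$ because $\lambda_1\neq-\lambda_2$), hence either $\alpha=\beta$ or $\alpha+\beta=\pi$. On the branch $\alpha=\beta$ the condition $\sin\alpha\,(p+2q\cos\alpha)=0$ gives the candidates $\alpha=\pi$, with value $-2p+q$, and, when $p\le 2q$, the interior point $\cos\alpha=-p/(2q)$, with value $-p^2/(2q)-q$; the branch $\alpha+\beta=\pi$ yields only $-q$. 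The identity $(-2p+q)-\big(-p^2/(2q)-q\big)=(p-2q)^2/(2q)\ge 0$, together with $-p^2/(2q)-q\le -q$, shows that the interior critical point is the global minimizer precisely when $p\le 2q$, and $\alpha=\beta=\pi$ otherwise. Reading $\alpha=4\tau$ modulo $2\pi$ (so $\tau$ is read modulo $\pi/2$) gives $\tau=\vartheta=\pi/4$ in the regime $p\ge 2q$, i.e. $(\lambda_1+\lambda_2)^2\ge 2(\lambda_1-\lambda_2)^2$, and $\tau=\vartheta=x$ or $\tfrac{\pi}{2}-x$ with $x=\tfrac14\arccos(-p/(2q))$ otherwise. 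Substituting the corresponding $g_{\min}$ back through $F=\tfrac{p}{2}+\tfrac{q}{4}-\tfrac14 g_{\min}$ and then into $\tilde h_2$ reproduces the two stated minimal values (using $p+q=\lambda_1^2+\lambda_2^2$).

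I expect the main obstacle to be the third step: pinning down that the interior critical point dominates exactly on $p\le 2q$ and, above all, securing the \emph{if and only if} (uniqueness of the minimizer up to the stated congruences). This requires discarding the spurious stationary family $\alpha+\beta=\pi$ and treating the boundary case $p=2q$ with care, since there the interior critical point $\cos\alpha=-1$ merges with $\alpha=\beta=\pi$. The trigonometric bookkeeping of the first two steps is lengthy but routine.
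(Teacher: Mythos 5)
The paper does not actually print a proof of this theorem (it only remarks that the result ``can be solved analytically, but the proof is rather technical''), so there is nothing to compare your argument against line by line; judged on its own, your proof is correct and complete. I checked the key computations: the reduction $\tilde h_2=\frac{1}{4(\lambda_1-\lambda_2)^2}\Var(\hat a_{12}+\hat a_{21})$ follows from the two-dimensional analogue of Lemma \ref{deriv} exactly as you say; the expansion $\hat a_{12}+\hat a_{21}=\sin(\tau+\vartheta)(\hat x_{11}-\hat x_{22})+\cos(\tau+\vartheta)(\hat x_{12}+\hat x_{21})$ is right; with $x_{11},x_{22}$ expressed through $\cos(\tau\pm\vartheta)$ and $x_{12},x_{21}$ through $\sin(\tau\pm\vartheta)$ one indeed gets $x_{11}^2+x_{22}^2=p\cos^2(\tau-\vartheta)+q\cos^2(\tau+\vartheta)$ and $x_{12}^2+x_{21}^2=p\sin^2(\tau-\vartheta)+q\sin^2(\tau+\vartheta)$, which collapses the bracket to your $F$; and the reduction to minimizing $g(\alpha,\beta)=p(\cos\alpha+\cos\beta)+q\cos(\alpha+\beta)$ with the critical values $2p+q$, $-2p+q$, $-q$ and (for $p\le 2q$) $-p^2/(2q)-q$ yields exactly the two stated minima and minimizers, since $(-2p+q)-(-p^2/(2q)-q)=(p-2q)^2/(2q)$ and $-p^2/(2q)-q<-q$ whenever $p>0$ (this is where $\lambda_1\neq-\lambda_2$ is used, and $q>0$ from $\lambda_1>\lambda_2$ settles the comparison $-2p+q<-q$ in the first regime). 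Your answers also reproduce the paper's numerical example ($\lambda_1=1,\lambda_2=0$ gives $x=\pi/6$). One small point of hygiene: on the family $\alpha+\beta=\pi$ only the points with $\sin\alpha=0$ are genuine critical points of $g$, not the whole curve; since $g\equiv-q$ there this does not affect the candidate value, but the ``if and only if'' claim is cleaner if you enumerate critical points rather than branches. The degenerate case $p=2q$, where the interior critical point merges with $\alpha=\beta=\pi$, is handled consistently by your formulas.
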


\begin{pelda}
If $\lambda_1=0.8$ and $\lambda_2=0.2$ the optimal input and measurement directions are $\tau_{opt}=\vartheta_{opt}=\frac{\pi}{4}$. These are the optimal angles in most cases, however, if $\lambda_1=1$ and $\lambda_2=0$ then $\tau_{opt}=\vartheta_{opt}=\frac{\pi}{3} \textrm{ or } \frac{\pi}{6}$.
\end{pelda}

\section{Conclusion and discussion}

In Theorem \ref{alap_tetel} we gave a parametrization of the channel matrix $A$, while in Theorem \ref{domain} we obtained the domain where the parametrization is bijective. We gave an estimation procedure for the channel matrix and channel parameters in Section 3, and then optimized this process with respect to input qubit and measurement directions. We have proven that for the optimal estimation of a channel matrix, we have to take input qubit -- measurement pairs in the channel directions (Theorem \ref{ch_mx}), and the same statement is true for estimating the channel contraction parameters (Theorem \ref{contraction}). This result can be implemented using a two-step algorithm for optimally estimating the channel parameters, where in the first, shorter step we make a rough estimation on the channel directions, and then we set these directions for the second step of the algorithm when the contraction parameters are obtained. 

The estimation of the angle parameters can not be performed analytically in the general case, and the loss function has a quite complex form even in a simplified case. From simulation investigations we conjecture that using a complementary basis to the channel directions would give a nearly optimal result.


\begin{thebibliography}{1}

\bibitem{Anis-2012}
A.~Anis and  A.~I. Lvovsky:
\newblock Maximum-likelihood coherent-state quantum process tomography.
\newblock {\em New J. Phys.} {\bf 14}, 105021, 2012.

\bibitem{Ballo-2011}
G.~Ball\'o, K.~M. Hangos and D.~Petz:
\newblock Convex Optimization-Based Parameter Estimation and Experiment Design for Pauli Channels.
\newblock {\em IEEE Trans. on Automatic Control} {\bf 57}(8), 2056-2061, 2012.

\bibitem{Belavkin-2005}
V. Belavkin: Contravariant densities, complete distances and relative fidelities for quantum channels. {\em Rep. Math. Phys.} {\bf 55}(1), 61-77, 2005.

\bibitem{Bendersky-2008}
A.~Bendersky, F.~Patawski, and J.P. Paz:
\newblock Selective and efficient estimation of parameters for quantum process
  tomography.
\newblock {\em Phys. Rev. Lett.} {\bf 100}, 190403, 2008.

\bibitem{Branderhorst2009}
M.P.A. Branderhorst, J.~Nunn, I.A. Walmsley, and R.L. Kosut:
\newblock Simplified quantum process tomography.
\newblock {\em New J. Phys.} {\bf 11}, 115010, 2009.

\bibitem{Chiuri-2011}
A.~Chiuri, V.~Rosati, G.~Vallone, S.~Padua, H.~Imai, S.~Giacomini,
  C.~Macchiavello, and P.~Mataloni.
\newblock Experimental realization of optimal noise estimation for a general
  {P}auli channel.
\newblock {\em Phys. Rev. Lett.}, {\bf 107}, 253602, 2011.
%

\bibitem{Fujiwara2003}
A.~Fujiwara and H.~Imai:
\newblock Quantum parameter estimation of a generalized {P}auli channel.
\newblock {\em J. Phys. A}, {\bf 36}, 8093--8103,
  2003.

\bibitem{Keshari-2011}
S. Rahimi-Keshari, A. Scherer, A. Mann, A. T. Rezakhani, A. I. Lvovsky and B. C. Sanders:
\newblock Quantum process tomography with coherent states.
\newblock {\em New J. Phys.}, {\bf 13}, 013006 , 2011.

\bibitem{Mohseni-2008}
M.~Mohseni, A.~T. Rezakhani, and D.~A. Lidar:
\newblock Quantum process tomography: Resource analysis of different strategies.
\newblock {\em Phys. Rev. A}, {\bf 77}, 032322, 2008.

\bibitem{Nathanson-2007}
M.~Nathanson and M.~B. Ruskai:
\newblock Pauli diagonal channels constant on axes.
\newblock {\em J. Phys. A}, {\bf 40}, 8171-8204, 2007.

\bibitem{Nielsen-book}
M.~A. Nielsen and I.~L. Chuang:
\newblock {\em Quantum Computation and Quantum Information}.
\newblock {Cambridge University Press}, 2000.

\bibitem{Petz-book}
D.~Petz:
\newblock {\em Quantum Information Theory and Quantum Statistics}.
\newblock Theoretical and Mathematical Physics. Springer-Verlag, 2008.

\bibitem{Petz-2008}
D.~Petz and H.~Ohno:
\newblock Generalizations of {P}auli channels.
\newblock {\em Acta Math. Hungar.},  {\bf 124}, 165-177, 2009.

\bibitem{PR-2012}	
D. Petz and L. Ruppert: 
Efficient quantum tomography needs complementary and symmetric measurements. 
{\em Rep. Math. Phys.},  {\bf 69}(2), 161-177, 2012.

\bibitem{Roga-2010}
W. Roga, M. Fannes and K. Zyczkowski:
Davies maps for qubits and qutrits. 
{\em Rep. Math. Phys.} {\bf 66}(3), 311-329, 2010.

\bibitem{Ruppert-2012}
L.~Ruppert, D.~Virosztek and K.~M. Hangos:
\newblock Optimal parameter estimation of {P}auli channels.
\newblock {\em J. Phys. A}, {\bf 45}, 265305, 2012.
%

\bibitem{Sasaki-2002}
M.~Sasaki, M.~Ban, and S.~M. Barnett:
\newblock Optimal parameter estimation of a depolarizing channel.
\newblock {\em Phys. Rev. A}, {\bf 66}(2), 022308, 2002.

\bibitem{Teo-2011}
Y. S. Teo, B-G. Englert, J. Rehacek and Z. Hradil:
\newblock Adaptive schemes for incomplete quantum process tomography
\newblock {\em Phys. Rev. A},  {\bf 84}, 062125 , 2011.

\bibitem{Young-2009}
K.~C. Young, M.~Sarovar, R.~Kosut, and K.~B. Whaley:
\newblock Optimal quantum multiparameter estimation and application to dipole-
  and exchange-coupled qubits.
\newblock {\em Phys. Rev. A}, {\bf 79}(6), 062301, 2009.

\end{thebibliography}
\end{document}